\let\accentvec\vec
\let\vec\accentvec
\begin{document}
%

\title{A Sub-linear Time Algorithm for Approximating k-Nearest-Neighbor with Full Quality Guarantee}

\titlerunning{Approximate k-NN with Full Approximation Guarantee}
%
\author{Hengzhao Ma\inst{1} \and
	Jianzhong Li\inst{2}
}
\authorrunning{H. Ma, J. Li}
%
\institute{
	\email{hzma@stu.hit.edu.cn}\and
	\email{lijzh@hit.edu.cn}\\
	Harbin Institute of Technology, Harbin, Heilongjiang 150001, China
}
\maketitle              
\begin{abstract}
	In this paper we propose an algorithm for the approximate k-Nearest-Neighbors problem. According to the existing researches, there are two kinds of approximation criterion. One is the distance criteria, and the other is the recall criteria. All former algorithms suffer the problem that there are no theoretical guarantees for the two approximation criterion. The algorithm proposed in this paper unifies the two kinds of approximation criterion, and has full theoretical guarantees. Furthermore, the query time of the algorithm is sub-linear. As far as we know, it is the first algorithm that achieves both sub-linear query time and full theoretical approximation guarantee.
	
	\keywords{Computation Geometry \and Approximate k-Nearest-Neighbors}
\end{abstract}

\section{Introduction}\label{sec:intro}

The k-Nearest-Neighbor (kNN) problem is a well-known problem in theoretical computer science and applications. Let $(U,D)$ be a metric space, then for the input set $P\subseteq U$ of elements and a query element $q\in U$, the kNN problem is to find the $k$ elements with smallest distance to $q$. Since the exact results are expensive to compute when the size of the input is large \cite{Indyk1998}, and approximate results serve as good as the exact ones in many applications \cite{Weber1998}, the approximate kNN, kANN for short, draws more research efforts in recent years. There are two kinds of approximation criterion for the kANN problem, namely, the distance criterion and the recall criterion. The distance criterion requires that the ratio between the distance from the approximate results to the query and the distance from the exact results to the query is no more than a given threshold. The recall criterion requires that the size of the intersection of the approximate result set and the exact result set is no less than a given threshold. The formal description will be given in detail in Section \ref{sec:pre}. Next we brief the existing algorithms for the kANN problem to see how these two criteria are considered by former researchers.

The algorithms for the kANN problem can be categorized into four classes.
The first class is the tree-based methods. The main idea of this method is to recursively partition the metric space into sub-spaces, and organize them into a tree structure. The K-D tree \cite{Bentley1975} is the representative idea in this category. It is efficient in low dimensional spaces, but the performance drops rapidly when the number of dimension grows up. Vantage point tree (VP-tree) \cite{Yianilos1993} is another data structure with a better partition strategy and better performance. The FLANN \cite{Muja2014} method is a recent work with improved performance in high dimensional spaces, but it is reported that this method would achieve in sub-optimal results \cite{Lin2019}. To the best of our knowledge, the tree based methods can satisfy neither the distance nor the recall criterion theoretically.

The second class is the permutation based methods. The idea is to choose a set of pivot points, and represent each data element with a permutation of the pivots sorted by the distance to it. In such a representation, close objects will have similar permutations. Methods using the permutation idea include the MI-File \cite{Amato2014} and PP-Index \cite{Esuli2012}. Unfortunately, the permutation based method can not satisfy either of the distance or the recall criterion theoretically, as far as we know.

The third class is the Locality Sensitive Hashing (LSH) based methods. LSH was first introduced by Indyk et. \cite{Indyk1998} for the kANN problem where $k=1$.  Soon after, Datar et. \cite{Datar2004} proposed the first practical LSH function, and since then there came a burst in the theoretical and applicational researches on the LSH framework. For example, Andoni et. proved the lower bound of the time-space complexities of the LSH based algorithms \cite{Andoni2017}, and devised the optimal LSH function which meets the lower bound \cite{Andoni2015}. On the other hand, Gao et. \cite{Gao2015} proposed an algorithm that aimed to close the gap between the LSH theory and kANN search applications.  See \cite{Wang2014} for a survey. The basic LSH based method can satisfy only the distance criterion when $k=1$ \cite{Indyk1998}. Some existing algorithms made some progress. The C2LSH algorithm \cite{Gan2012} solved the kANN problem with the distance criterion, but it has a constraint that the approximation factor must be a square of an integer. The SRS algorithm \cite{Sun2014} is another one aimed at the distance criterion. However, it only has partial guarantee, that is, the results satisfy the distance criterion only when the algorithm terminates on a specific condition.

The forth class is graph based methods. The specific kind of graphs used in this method is the proximity graphs, where the edges in this kind of graph are defined by the geometric relationship of the points. See \cite{Mitchell2017} for a survey. The graph based kANN algorithms usually conduct a navigating process on the proximity graphs. This process selects an vertex in the graph as the start point, and move to the destination point following some specific navigating strategy. For example, Paredes et. \cite{Paredes2005} used the kNN graph, Ocsa et. \cite{Ocsa2007} used the Relative Neighborhood Graph (RNG), and Malkov et. \cite{Malkov2014} used the Navigable Small World Graph (NSW) \cite{Malkov2014}. None of these algorithms have theoretical guarantee on the two approximation criteria.

In summary, most of the existing algorithms do not have theoretical guarantee on either of the two approximation criteria. The recall criterion is only used as a measurement of the experimental results, and the distance criterion is only partially satisfied by only a few algorithms \cite{Gan2012,Sun2014}.  
In this paper, we propose a sub-linear time algorithm for kANN problem that unifies the two kinds of approximation criteria, which overcomes the disadvantages of the existing algorithms. The contributions of this paper are listed below.

\begin{enumerate}
	\item We propose an algorithm that unifies the distance criterion and the recall criterion for the approximate k-Nearest-Neighbor problem. The result returned by the algorithm can satisfy at least one criterion in any situation. This is a major progress compared to the existing algorithms.
	\item Assuming the input point set follows the spatial Poisson process, the algorithm takes $O(n\log{n})$ time of preprocessing, $O(n\log{n})$ space, and answers a query in $O(dn^{1/d}\log{n}+kn^\rho\log{n})$ time, where $\rho<1$ is a constant.
	\item The algorithm is the first algorithm for kANN that provides theoretical guarantee on both of the approximation criteria, and it is also the first algorithm that achieves sub-linear query time while providing theoretical guarantees. The former works \cite{Gan2012,Sun2014} with partial guarantee both need linear query time.
\end{enumerate}

The rest of this paper is organized as follows. Section \ref{sec:pre} introduces the definition of the problem and some prerequisite knowledge. The detailed algorithm are presented in Section \ref{sec:alg}. Then the time and space complexities are analyzed in Section \ref{sec:analyz}. Finally the conclusion is given in Section \ref{sec:conc}.

\section{Preliminaries}\label{sec:pre}

\subsection{Problem Definitions}
The problem studied in this paper is the approximate k-Nearest-Neighbor problem, which is denoted as kANN for short. In this paper the problem is constrained to the Euclidean space. The input is a set $P$ of points where each $p\in P$ is a d-dimensional vector $(p^{(1)},p^{(2)},\cdots,p^{(n)})$. The distance between two points $p$ and $p'$ is defined by $D(p,p')=\sqrt{\sum\limits_{i=1}^{d}{(p^{(i)}-p'^{(i)}})^2}$, which is the well known Euclidean distance. Before giving the definition of the kANN problem,  we first introduce the exact kNN problem.

\begin{definition}[kNN]\label{def:knn}
	Given the input point set $P\subset R^d$ and a query point $q\in R^d$, define $kNN(q,P)$ to be the set of $k$ points in $P$ that are nearest to $q$. Formally,
	\begin{enumerate}
		\item $kNN(q,P)\subseteq P$, and $|kNN(q,P)|=k$;
		\item $D(p,q)\le D(p',q)$ for $\forall p\in kNN(q,P)$ and $\forall p'\in P\setminus kNN(q,P)$.
	\end{enumerate}
\end{definition}

Next we will give the definition of the approximate kNN. There are two kinds of definitions based on different approximation criteria.

\begin{definition}[$kANN_c$]\label{def:cknn}
	Given the input point set $P\subset R^d$, a query point $q\in R^d$, and a approximation factor $c>1$, find a point set $kANN_c(q,P)$ which satisfies:
	\begin{enumerate}
		\item $kANN_c(q,P)\subseteq P$, and $|kANN_c(q,P)|=k$;
		\item let $T_k(q,P)=\max\limits_{p\in kNN(q,P)}{D(p,q)}$, then $D(p',q)\le c\cdot T_k(q,P)$ holds for $\forall p'\in kANN_c(q,P)$.
	\end{enumerate}
\end{definition}
\begin{remark}
	The second requirement in Definition \ref{def:cknn} is called the distance criterion.
\end{remark}
\begin{definition}[$kANN_\delta$]\label{def:dknn}
		Given the input point set $P\subset R^d$, a query point $q\in R^d$, and a approximation factor $\delta<1$, find a point set $kANN_\delta(q,P)\subseteq P$ which satisfies:
	\begin{enumerate}
		\item $kANN_\delta(q,P)\subseteq P$, and $|kANN_\delta(q,P)|=k$;
		\item $|kANN_\delta(q,P)\cap kNN(q,P)|\ge \delta \cdot k$.
	\end{enumerate}
\end{definition}

\begin{remark}
	If a kANN algorithm returned a set $S$, the value $\frac{|S\cap kNN(q,P)|}{|kNN(q,P)|}$ is usually called the recall of the set $S$. This is widely used in many works to evaluate the quality of the kANN algorithm. Thus we call the second statement in Definition \ref{def:dknn} as the recall criterion.
\end{remark}

Next we give the definition of the problem studied in this paper, which unifies the two different criteria.

\begin{definition}\label{def:cdknn}
Given the input point set $P\subset R^d$, a query point $q\in R^d$, and approximation factors $c>1$ and $\delta<1$, find a point set $kNN_{c,\delta}(q,P)$ which satisfies:
\begin{enumerate}
	\item $kANN_{c,\delta}(q,P)\subseteq P$, and $|kANN_{c,\delta}(q,P)|=k$;
	\item $kANN_{c,\delta}(q,P)$ satisfies at least one of the distance criterion and the recall criterion. Formally,
	either $D(p',q)\le c\cdot T_k(q,P)$ holds for $\forall p'\in kANN_{c,\delta}(q,P)$, or $|kANN_{c,\delta}(q,P)\cap kNN(q,P)|\ge \delta \cdot k$.
\end{enumerate}
\end{definition}

According to Definition \ref{def:cdknn}, the output of the algorithm is required to satisfy one of the two criteria, but not both. It will be our future work to devise an algorithm to satisfy both of the criteria.

In the rest of this section we will introduce some concepts and algorithms that will be used in our proposed algorithm.

\subsection{Minimum Enclosing Spheres}
The D-dimensional spheres is the generalization of the circles in the 2-dimensional case. Let $c$ be the center and $r$ be the radius. A d-dimensional sphere, denoted as $S(c,r)$, is the set $S(c,r)=\{x\in R^d \mid D(x,c)\le r \}$. Note that the boundary is included. If $q\in S(c,r)$ we say that $q$ falls inside sphere $S(c,r)$, or the sphere encloses point $p$. A sphere $S(c,r)$ is said to pass through point $p$ iff $D(c,p)=r$.

Given a set $P$ of points, the minimum enclosing sphere (MES) of $P$, is the d-dimensional sphere enclosing all points in $P$ and has the smallest possible radius. It is known that the MES of a given finite point set in $R^d$ is unique, and can be calculated by a quadratic programming algorithm \cite{Yildirim2008}. Next we introduce the approximate minimum enclosing spheres.

\begin{definition}[AMES]
	Given a set of points $P\subset R^d$ and an approximation factor $\epsilon<1$, the approximate minimum enclosing sphere of $P$, denoted as $AMES(P,\epsilon)$, is a d-dimensional sphere $S(c,r)$ satisfies:
	
	\begin{enumerate}
		\item $p\in S(c,r)$ for $\forall p\in P$;
		\item $r<(1+\epsilon)r^*$, where $r^*$ is the radius of the exact MES of $P$.
	\end{enumerate}
\end{definition}

The following algorithm can calculate the AMES in $O(n/\epsilon^2)$ time, which is given in \cite{Badoiu2003}.

\begin{algorithm}[H]
	\caption{Compute $AMES$}\label{alg:ames}
	\KwIn{a point set $P$, and an approximation factor $\epsilon$.}
	\KwOut{$AMES(P,\epsilon)$}
	$c_0\gets$ an arbitrary point in $P$\;
	
	\For{$i=1$ to $1/\epsilon^2$}{
		$p_i\gets$ the point in $P$ farthest away from $c_{i-1}$\;
		$c_i\gets c_{i-1}+\frac{1}{i}(p_i-c_{i-1})$\;
	}
\end{algorithm}

The following Lemma gives the complexity of Algorithm \ref{alg:ames} .

\begin{lemma}[\cite{Badoiu2003}]\label{lema:ames-time}
	For given $\epsilon$ and $P$ where $|P|=n$, Algorithm \ref{alg:ames} can calculate $AMES(P,\epsilon)$ in $O(n/\epsilon^2)$ time.
\end{lemma}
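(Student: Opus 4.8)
The statement has two parts: that Algorithm~\ref{alg:ames} runs in $O(n/\epsilon^2)$ time, and (implicitly) that the sphere it produces really is an $AMES(P,\epsilon)$. The time bound is immediate: the loop executes $\lceil 1/\epsilon^2\rceil$ iterations; each iteration scans all $n$ points once to locate the one farthest from the current center $c_{i-1}$, costing $O(nd)$, and the affine update $c_i\gets c_{i-1}+\frac1i(p_i-c_{i-1})$ costs $O(d)$; treating the dimension $d$ as a constant this totals $O(n/\epsilon^2)$. So the substantive work is correctness. The plan is to show that after $m=\lceil 1/\epsilon^2\rceil$ steps the center $c_m$ satisfies $D(c_m,c^*)\le\epsilon r^*$, where $(c^*,r^*)$ is the center and radius of the exact MES; then the output sphere $S\bigl(c_m,\max_{p\in P}D(c_m,p)\bigr)$ encloses $P$ and, by the triangle inequality, has radius at most $r^*+\epsilon r^*=(1+\epsilon)r^*$, which is exactly the AMES condition.

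To obtain $D(c_m,c^*)\le\epsilon r^*$, I would prove by induction on $i$ the invariant $D(c_i,c^*)^2\le (r^*)^2/i$. The base case $i=1$ holds because $c_1=p_1\in P\subseteq S(c^*,r^*)$. The one nontrivial geometric ingredient for the inductive step is a ``Pythagorean'' property of the MES: for \emph{any} center $c$, the point of $P$ farthest from $c$ lies at distance at least $\sqrt{(r^*)^2+D(c,c^*)^2}$ from $c$. This comes from the optimality characterization of the MES --- there is a subset $Q\subseteq P$ on the boundary sphere with $c^*\in\operatorname{conv}(Q)$, and averaging $\|q-c\|^2$ over the convex combination expressing $c^*$ yields exactly $(r^*)^2+D(c,c^*)^2$, so the maximum over $Q$, hence over $P$, is at least that value. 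Applying this at $c=c_{i-1}$ (where $p_i$ is the farthest point) and expanding $\|p_i-c_{i-1}\|^2=\|p_i-c^*\|^2-2\langle p_i-c^*,c_{i-1}-c^*\rangle+\|c_{i-1}-c^*\|^2$ gives the bound $\langle p_i-c^*,c_{i-1}-c^*\rangle\le\tfrac12\bigl(\|p_i-c^*\|^2-(r^*)^2\bigr)\le 0$. Now write $c_i-c^*=\frac{i-1}{i}(c_{i-1}-c^*)+\frac1i(p_i-c^*)$, expand $D(c_i,c^*)^2$, substitute this inner-product bound together with $\|p_i-c^*\|\le r^*$ and the inductive hypothesis $D(c_{i-1},c^*)^2\le (r^*)^2/(i-1)$, and simplify; the right-hand side collapses to exactly $(r^*)^2/i$.

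The main obstacle is precisely this inductive step: one must pick the right potential (namely $i\cdot D(c_i,c^*)^2$) and inject the MES optimality fact in the form of the inner-product bound, since a crude triangle-inequality estimate of the cross term does not reveal that the averaging update makes genuine progress toward $c^*$. The remaining pieces --- the base case, the closing triangle inequality, and the loop-counting for the running time --- are routine. Since the lemma is attributed to \cite{Badoiu2003}, one could also simply cite their analysis for correctness and carry out only the $O(n/\epsilon^2)$ accounting in full.
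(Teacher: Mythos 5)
Your proposal is correct. Note that the paper itself gives no proof of this lemma at all --- it is stated purely as a citation to \cite{Badoiu2003} --- so there is no in-paper argument to compare against; what you have written is a faithful reconstruction of the B\u{a}doiu--Clarkson core-set analysis that the citation points to. The two ingredients you isolate are exactly the right ones: the running-time count ($\lceil 1/\epsilon^2\rceil$ iterations, each dominated by an $O(nd)$ farthest-point scan) is immediate, and the correctness argument hinges on the optimality characterization of the MES ($c^*\in\operatorname{conv}(Q)$ for the boundary contact points $Q$, whence by the parallel-axis identity $\sum_q\lambda_q\|q-c\|^2=(r^*)^2+\|c-c^*\|^2$ and the farthest point from any $c$ is at distance at least $\sqrt{(r^*)^2+D(c,c^*)^2}$). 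Your inductive step closes: with $\langle p_i-c^*,c_{i-1}-c^*\rangle\le 0$ and $\|p_i-c^*\|\le r^*$, expanding $\|c_i-c^*\|^2=\bigl\|\tfrac{i-1}{i}(c_{i-1}-c^*)+\tfrac1i(p_i-c^*)\bigr\|^2$ and using $D(c_{i-1},c^*)^2\le(r^*)^2/(i-1)$ gives $\tfrac{(i-1)}{i^2}(r^*)^2+\tfrac{1}{i^2}(r^*)^2=(r^*)^2/i$, and the base case $c_1=p_1\in S(c^*,r^*)$ is trivially checked. The only cosmetic remarks: the lemma as stated asks for radius strictly less than $(1+\epsilon)r^*$ while the triangle-inequality conclusion gives $\le(1+\epsilon)r^*$ (a slack one recovers by running one extra iteration or noting the inequalities are strict unless $c_m=c^*$), and Algorithm~\ref{alg:ames} as printed never actually outputs a radius, so your choice of $\max_{p\in P}D(c_m,p)$ is the natural reading. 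Neither affects the substance; the argument is sound.
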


\subsection{Delaunay Triangulation}
The Delaunay Triangulation (DT) is a fundamental data structure in computation geometry. The definition is given below.

\begin{definition}[DT]
	Given a set of points $P\subset R^d$, the Delaunay Triangulation is a graph $DT(P)=(V,E)$ which satisfies:
	\begin{enumerate}
		\item $V=P$;
		\item for $\forall p,p'\in P$, $(p,p')\in E$ iff there exists a d-dimensional sphere passing through $p$ and $p'$, and no other $p''\in P$ is inside it.
	\end{enumerate}
\end{definition}

The Delaunay Triangulation is a natural dual of the Voronoi diagram. We omit the details about their relationship since it is not the focus of this paper.

There are extensive research works about the Delaunay triangulation. An important problem is to find the expected properties of $DT$ built on random point sets. Here we focus on the 
point sets that follow the spatial Poisson process in d-dimensional Euclidean space. In this model, for any region $\mathcal{R}\subset R^d$, the probability that $\mathcal{R}$ contains $k$ points follows the Poisson distribution. See \cite{PoissonProcessWiki} for more details. We cite one important property of the spatial Poisson process in the following lemma.

\begin{lemma}[\cite{PoissonProcessWiki}]\label{lema:poisson-process}
	Let $S\subset R^d$ be a point set following the spatial Poisson process. Suppose there are two regions $B\subseteq A\subset R^d$. For any point $p\in S$, if $p$ falls inside $A$ then the probability that $p$ falls inside $B$ is the ratio between the volume of $B$ and $A$. Formally, we have 
	$$\Pr[p\in B\mid p\in A]=\frac{volume(B)}{volume(A)}.$$
\end{lemma}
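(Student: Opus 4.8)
The plan is to reduce the statement to the two defining properties of a homogeneous spatial Poisson process with intensity $\lambda>0$: (i) for pairwise disjoint measurable regions, the numbers of points of $S$ falling in them are mutually independent, and (ii) for any measurable region $\mathcal{R}\subset R^d$, the number of points of $S$ in $\mathcal{R}$ is Poisson-distributed with parameter $\lambda\cdot volume(\mathcal{R})$. Write $v_B=volume(B)$ and $v_A=volume(A)$, and decompose $A$ into the disjoint pieces $B$ and $A\setminus B$, so that $volume(A\setminus B)=v_A-v_B$.

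First I would condition on the total count $N_A$ of points of $S$ in $A$. Let $N_B$ and $N_{A\setminus B}$ denote the counts in $B$ and in $A\setminus B$; by (i) these are independent, and by (ii) they are Poisson with parameters $\lambda v_B$ and $\lambda(v_A-v_B)$ respectively, while $N_A=N_B+N_{A\setminus B}$ is Poisson with parameter $\lambda v_A$. A direct computation with the Poisson probability mass functions then gives, for $0\le k\le n$,
\[
\Pr[N_B=k\mid N_A=n]=\frac{\Pr[N_B=k]\,\Pr[N_{A\setminus B}=n-k]}{\Pr[N_A=n]}=\binom{n}{k}\left(\frac{v_B}{v_A}\right)^{k}\left(1-\frac{v_B}{v_A}\right)^{n-k},
\]
so conditioned on $N_A=n$, the count $N_B$ is $\mathrm{Binomial}(n,v_B/v_A)$. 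Equivalently, conditioned on there being $n$ points of $S$ in $A$, each of them independently lies in $B$ with probability $v_B/v_A$; this is the usual conditional-uniformity description of a Poisson process restricted to a region.

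Finally, the claim follows by reading the statement in this conditional form: for a point $p$ of $S$ known to lie in $A$, the event $p\in B$ has probability $v_B/v_A=volume(B)/volume(A)$, independently of the other points and of the value of $N_A$. The only delicate point is making precise what ``for any point $p\in S$ with $p\in A$'' means — one must either fix an enumeration of the points of the process and condition on the event that a designated point lies in $A$, or invoke the conditional-uniformity (Palm-type) theorem for the Poisson process — and I expect this interpretive step to be the main obstacle; once it is pinned down, the Binomial computation above is routine and carries no further difficulty.
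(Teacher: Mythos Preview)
Your argument is the standard derivation of the conditional-uniformity property of a homogeneous Poisson process, and it is correct as stated; the one interpretive subtlety you flag (what ``a point $p\in S$ with $p\in A$'' means for a point process) is genuine but is resolved exactly as you indicate, via the Binomial thinning identity or the Palm-distribution viewpoint.

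As for comparison with the paper: there is nothing to compare. The paper does not prove this lemma at all; it is stated with a citation to an external reference and used as a black box in the proof of Lemma~\ref{lema:res-d}. Your write-up therefore supplies strictly more than the paper does. If anything, you could shorten it to the Binomial computation and a one-line remark that this is the well-known conditional-uniformity property of Poisson processes.
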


Further, we cite some important properties of the Delaunay triangulation built on point sets which follow the spatial Poisson process.

\begin{lemma}[\cite{Bern1991}]\label{lema:dt-max-degree}
	Let $S\subset R^d$ be a point set following the spatial Poisson process,
	and $\Delta(G)=\max\limits_{p\in V(G)}|\{(p,q)\in E(G)\}|$ be the maximum degree of $G$. Then the expected maximum degree of $DT(S)$ is $O(\log{n}/\log{\log{n}})$.
\end{lemma}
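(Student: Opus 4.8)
Here is the approach I would take.

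The plan is to bound the degree of a single ``typical'' vertex and then take a union bound over the roughly $n$ vertices of $S$. Fix a point $p\in S$; by the defining property of the spatial Poisson process (conditioning on $p\in S$ leaves the remaining points a Poisson process of the same intensity), the distribution of $\deg_{DT(S)}(p)$ does not depend on $p$, so the expected number of vertices of $S$ of degree at least $t$ equals $n\cdot\Pr[\deg_{DT(S)}(p)\ge t]$, and therefore $\Pr[\Delta(DT(S))\ge t]\le n\cdot\Pr[\deg_{DT(S)}(p)\ge t]$ by Markov's inequality. Since $E[\Delta]=\sum_{t\ge1}\Pr[\Delta\ge t]$ and $\Delta\le|S|$ always, it suffices to establish a tail bound of ``factorial type'' $\Pr[\deg_{DT(S)}(p)\ge t]\le(C/t)^{t}$ for an absolute constant $C$. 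Indeed $n(C/t)^{t}\to0$ as soon as $t$ exceeds the threshold $t^{*}$ solving $t\log t\approx\log n$, i.e.\ $t^{*}=\Theta(\log n/\log\log n)$, and $\sum_{t>t^{*}}n(C/t)^{t}=O(1)$, which gives $E[\Delta]=O(\log n/\log\log n)$.

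For the single-vertex tail bound I would start from a deterministic observation. Let $V(p)$ be the Voronoi cell of $p$ and let $\rho_p$ be its circumradius (the radius of the smallest ball centred at $p$ containing $V(p)$). Every Delaunay neighbour $q$ of $p$ has the bisector hyperplane of $p$ and $q$ meeting $V(p)$, so $D(p,q)\le 2\rho_p$; hence $\deg_{DT(S)}(p)$ is at most the number of points of $S$ in the ball of radius $2\rho_p$ around $p$. If the ``local scale'' $\rho_p$ were of the typical order $n^{-1/d}$, we would be done at once: by Lemma~\ref{lema:poisson-process} the number of points of $S$ in a ball of that radius around $p$ is dominated by a Poisson variable of constant mean $\mu$, and $\Pr[\mathrm{Poisson}(\mu)\ge t]\le 2(e\mu/t)^{t}$ for $t\ge 2\mu$ --- precisely the required super-exponential decay.

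The real difficulty, and the technical heart of the argument, is that $\rho_p$ need not be of order $n^{-1/d}$: with constant probability the Voronoi cell of $p$ reaches far because a region around it happens to be empty, and the naive split ``$\rho_p>L$ or too many points in the ball of radius $2L$ around $p$'' is useless since $\Pr[\rho_p>L]$ is only a constant. The remedy is a scale-localized, direction-localized analysis. One partitions the directions from $p$ into $O(1)$ cones of small aperture, and within each cone controls the Delaunay neighbours annulus by annulus. Two ingredients are needed. First, within a single narrow cone and a single dyadic annulus $\{x:2^{i}\le D(p,x)<2^{i+1}\}$, the number of Delaunay neighbours of $p$ has a Poisson-type tail --- two such neighbours are geometrically close, so Lemma~\ref{lema:poisson-process} bounds how many can occur. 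Second, the number of distinct scales at which a given cone contains a Delaunay neighbour of $p$ is $O(1)$ with a doubly-exponentially small tail: a Delaunay neighbour of $p$ lying a factor $2^{m}$ farther than $p$'s nearest neighbour forces an empty ball through $p$ of radius at least a constant times $2^{m}n^{-1/d}$ (one must ``thread'' the witnessing empty ball past the closer neighbours, which lie in front of $p$ inside the cone, and this costs radius proportional to the scale separation), and Lemma~\ref{lema:poisson-process} bounds the probability of such an empty ball by $\exp(-\Omega(2^{md}))$. Combining the two ingredients over the $O(1)$ cones shows that $\deg_{DT(S)}(p)$ is stochastically dominated by a Poisson variable of constant mean, yielding $\Pr[\deg_{DT(S)}(p)\ge t]\le(C/t)^{t}$ and hence the lemma. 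Making the ``thin empty ball'' geometry quantitative, and patching the per-annulus and cross-scale estimates together uniformly over all scales (and across the boundary of the sampling region), is the step I expect to demand the most care.
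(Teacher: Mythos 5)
This lemma is imported verbatim from \cite{Bern1991}; the paper contains no proof of it, so there is no in-paper argument to compare against. Your outer reduction is the standard one for this result and is correct as far as it goes: by the Slivnyak--Mecke property the degree of a typical vertex is identically distributed across vertices, the first-moment bound gives $\Pr[\Delta\ge t]\le n\,\Pr[\deg(p)\ge t]$, and a factorial-type tail $\Pr[\deg(p)\ge t]\le (C/t)^{t}$ makes $n(C/t)^t$ summable beyond $t^{*}=\Theta(\log n/\log\log n)$, whence $E[\Delta]\le t^{*}+\sum_{t>t^{*}}n(C/t)^{t}=O(\log n/\log\log n)$. You are also right that this super-exponential decay is genuinely necessary: the naive dichotomy (``Voronoi circumradius $\rho_p>L$, or too many points in $B(p,2L)$'') only yields an $e^{-\Omega(t)}$ tail after optimizing $L$, which would give $O(\log n)$ rather than $O(\log n/\log\log n)$.

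The gap is that the one statement carrying all the content, $\Pr[\deg(p)\ge t]\le (C/t)^{t}$, is only sketched. The per-cone, per-annulus count is unproblematic: two Delaunay neighbours of $p$ in the same narrow cone and the same dyadic annulus are within a bounded region whose volume Lemma \ref{lema:poisson-process} (together with the Poisson tail for counts) controls. What is asserted but not established is the cross-scale claim that a Delaunay neighbour of $p$ lying a factor $2^{m}$ beyond $p$'s nearest point in the same cone forces an empty ball of radius $\Omega(2^{m}n^{-1/d})$ with failure probability $\exp(-\Omega(2^{md}))$. The witnessing empty Delaunay sphere through $p$ and the far neighbour is not centred anywhere convenient, and ``threading'' it past the nearer points is exactly where the cone aperture, the annulus constants, and the boundary of the sampling window all interact; without that lemma the stochastic domination of $\deg(p)$ by a constant-mean Poisson variable, and hence the tail bound, is not proved. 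As context for a cited result this level of detail is acceptable, but as a self-contained argument it is an outline of the Bern--Eppstein--Yao proof rather than a replacement for it.
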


\begin{lemma}[\cite{Buchin2009}]\label{lema:dt-time}
	Let $S\subset R^d$ be a point set following the spatial Poisson process. The expected time to construct $DT(S)$ is $O(n\log{n})$.
\end{lemma}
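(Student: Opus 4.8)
The plan is to compute $DT(S)$ by randomized incremental construction and to bound its running time using two facts about the Poisson model: that any subset of the points is again an i.i.d.\ uniform (equivalently, Poisson) sample, so that the sequence of points is exchangeable; and that the expected combinatorial size of the Delaunay triangulation of such a sample in a fixed dimension $d$ is linear in the number of points. Combining exchangeability with linear expected output size is exactly what is needed to run the classical backwards analysis.

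First I would set up the incremental algorithm. Process the points $p_1,\dots,p_n$ one at a time, maintaining after step $i$ the triangulation $DT(S_i)$ with $S_i=\{p_1,\dots,p_i\}$, together with a history DAG whose leaves are the current $d$-simplices and whose internal nodes are simplices destroyed in earlier steps. To insert $p_{i+1}$: (i) locate the current simplex containing it by walking down the DAG, and (ii) retriangulate the conflict region, i.e.\ the union of simplices whose circumscribing ball contains $p_{i+1}$, which is star-shaped from $p_{i+1}$, by a sequence of flips, linking each destroyed simplex to the new simplices that replace it.

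Next I would bound the two components of the cost separately. For the structural cost, note that the number of simplices created in step $i$ equals the number $X_i$ of simplices incident to $p_i$ in $DT(S_i)$, and the total work of all retriangulations is $O\big(\sum_i X_i\big)$. Conditioning on the unordered set $S_i$ and using exchangeability (so that $p_i$ is uniform in $S_i$), backwards analysis gives $\mathbb{E}[X_i \mid S_i] = \frac{1}{i}\sum_{p\in S_i}\deg_{DT(S_i)}(p) = (d+1)\,|DT(S_i)|/i$, since each $d$-simplex has $d+1$ vertices. Invoking the linear-expected-size property, $\mathbb{E}\big[\,|DT(S_i)|\,\big]=O(i)$, hence $\mathbb{E}[X_i]=O(1)$ and $\sum_i\mathbb{E}[X_i]=O(n)$. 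For the point-location cost, the standard analysis of the history DAG shows that the expected length of the search path traversed when inserting $p_{i+1}$ is $O(\log i)$, so the total expected point-location time is $\sum_i O(\log i)=O(n\log n)$. Adding the two bounds yields the claimed $O(n\log n)$ expected running time.

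The main obstacle is the linear-expected-complexity fact for random Delaunay triangulations, together with the need to apply it uniformly over all prefixes: the backwards-analysis sum has one term per prefix $S_i$, so I must know that $\mathbb{E}[\,|DT(S_m)|\,]=O(m)$ for every $m\le n$, not merely for $m=n$. This is where the structure of the Poisson model is essential, since restricting a Poisson process to a bounded window and then keeping a uniformly random $m$-subset again produces a uniform (hence Poisson) sample, so the bound transfers to every prefix; but the statement must be phrased over a bounded observation region, as a homogeneous Poisson process on all of $R^d$ has infinitely many points almost surely. A minor secondary point is that for $d>2$ one should check that the flip-based retriangulation of the star-shaped conflict region is well defined and costs $O(\#\{\text{destroyed simplices}\})$; this affects only the constants, not the asymptotics.
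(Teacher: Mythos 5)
The paper gives no proof of this lemma at all: it is imported verbatim from the cited reference \cite{Buchin2009}, so there is no internal argument to compare yours against. Judged on its own terms, your proposal is the classical randomized incremental construction analysis (Bowyer--Watson insertion, history DAG for point location, backwards analysis for the structural cost), and it is essentially sound: i.i.d.\ uniform points in a bounded window are exchangeable, the backwards-analysis identity $\mathbb{E}[X_i\mid S_i]=(d+1)\,|DT(S_i)|/i$ is correct, and you rightly isolate the one nontrivial external ingredient, namely that the expected number of simplices of the Delaunay triangulation of $m$ uniform points in a fixed-dimensional window is $O(m)$ for every prefix size $m$ (this is Dwyer's theorem, and your observation that a uniform random $m$-subset of a uniform sample is again uniform is exactly what makes it apply to all prefixes). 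Two small caveats: the per-insertion ``$O(\log i)$ search path'' claim is a simplification --- the rigorous statement is the Clarkson--Shor bound that the \emph{total} expected conflict/point-location work is $O\bigl(\sum_{r}(n/r)\cdot\mathbb{E}[f(r)]/r\bigr)=O(n\log n)$ when $f(r)=O(r)$, which is what you should cite for $d>2$; and the reference the paper actually leans on proves a stronger statement (expected \emph{linear} time after an initial sort, via biased randomized insertion orders along space-filling curves), so your argument establishes the lemma as stated but by a more standard and slightly weaker route. Neither point is a gap in correctness for the $O(n\log n)$ claim.
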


\subsection{Walking in Delaunay Triangulation}
Given a Delaunay Triangulation $DT$, the points and edges of $DT$ form a set of simplices. Given a query point $q$, there is a problem to find which simplex of $DT$ that $q$ falls in. There is a class of algorithms to tackle this problem which is called Walking. The Walking algorithm start at some simplex, and \emph{walk} to the destination by moving to adjacent simplices step by step. There are several kinds of walking strategy, including Jump$\&$Walk \cite{Mucke1999}, Straight Walk \cite{Bose2007} and Stochastic Walk \cite{Devillers2006}, etc. Some of these strategies are only applicable to 2 or 3 dimensions, while Straight Walk can generalize to higher dimension. As Figure \ref{fig:straight-walk} shows, the Straight Walk strategy only considers the simplices that intersect the line segment from the start point to the destination. The following lemma gives the complexity of this walking strategy. 

\begin{lemma}[\cite{Castro2011}]\label{lema:waking-in-dt} 
	Given a Delaunay Triangulation $DT$ of a point set $P\subset R^d$, and two points $p$ and $p'$ in $R^d$ as the start point and destination point, the walking from $p$ to $p'$ using Straight Walk takes $O(n^{1/d})$ expected time.
\end{lemma}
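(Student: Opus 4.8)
The plan is to reduce the running time of Straight Walk to the number of simplices of $DT$ stabbed by the segment $\overline{pp'}$, and then to bound that number in expectation using the stationarity, isotropy and scaling behaviour of the Poisson--Delaunay tessellation. First I would observe that Straight Walk does only local work at each step: standing in a simplex $\sigma$, it examines the $d+1$ facets of $\sigma$, decides through which one the directed segment $\overline{pp'}$ leaves $\sigma$, and advances to the unique neighbouring simplex across that facet; in fixed dimension $d$ this costs $O(1)$ per step. Assuming general position --- so that $\overline{pp'}$ meets no face of dimension at most $d-2$, which holds almost surely for a Poisson point set --- the visited simplices form exactly the ordered list of simplices crossed by $\overline{pp'}$, with no repetition, so the total cost is $\Theta(N)$ where $N$ is the number of simplices of $DT$ that $\overline{pp'}$ intersects. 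It then remains to prove $\mathbb{E}[N]=O(n^{1/d})$.

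For that bound, recall that if $P$ is drawn from a spatial Poisson process of intensity $\lambda$ then $DT(P)$ is a stationary, isotropic random tessellation of $R^d$. Let $f(\ell,\lambda)$ be the expected number of its cells crossed by a fixed segment of length $\ell$; by stationarity and additivity over a subdivision of the segment, $f(\ell,\lambda)=\ell\cdot g(\lambda)$, and by isotropy the factor $g$ does not depend on the direction. Scaling space by a factor $t>0$ turns intensity $\lambda$ into $\lambda/t^{d}$ and length $\ell$ into $t\ell$ while leaving the combinatorial count $N$ unchanged, so $\ell\, g(\lambda)=t\ell\, g(\lambda/t^{d})$ for all $t$, which forces $g(\lambda)=\gamma_{d}\,\lambda^{1/d}$ for a constant $\gamma_{d}$ depending only on $d$. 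Normalising so that the $n$ points lie in a window of volume $\Theta(1)$, we have $\lambda=\Theta(n)$ and $\|p-p'\|=O(1)$, hence $\mathbb{E}[N]=O(\lambda^{1/d})=O(n^{1/d})$. One can reach the same conclusion, without the scaling shortcut, from the known integral-geometric formulas for Poisson--Delaunay tessellations: the $(d-1)$-dimensional facet-area density scales like $\lambda^{(d-1)/d}$, and a Crofton-type argument converts this into the expected number of facets met by a segment of length $\ell$.

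The part I expect to be delicate is not the scaling heuristic but making the two reductions airtight. On the combinatorial side one must show Straight Walk never cycles or stalls; this rests on the general-position statement together with a careful analysis of the orientation predicates that choose the exit facet, and on the observation that the exit parameter along $\overline{pp'}$ strictly increases from step to step. On the probabilistic side, the clean identity $f(\ell,\lambda)=\ell\,g(\lambda)$ is exact only for the whole-space stationary tessellation, whereas $DT(P)$ for finite $P$ has boundary effects near the convex hull of $P$; I would argue these are lower-order because only an $O(1)$-length portion of $\overline{pp'}$ can lie near the hull and the expected number of hull-incident cells it meets is $O(1)$. Combining the per-step cost $O(1)$, the identity (number of steps) $=N$, and $\mathbb{E}[N]=O(n^{1/d})$ then gives the claimed $O(n^{1/d})$ expected time.
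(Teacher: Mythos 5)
The paper does not actually prove this lemma: it is imported verbatim from the cited reference \cite{Castro2011}, so there is no in-paper argument to compare against. Your reconstruction is the standard one and is essentially sound: reduce the cost of Straight Walk to the number of simplices stabbed by $\overline{pp'}$ (constant work per simplex in fixed dimension, no revisits because the exit parameter along the segment strictly increases), then bound $\mathbb{E}[N]$ via stationarity, isotropy, and the scaling covariance of the Poisson--Delaunay mosaic. You are also right to flag, implicitly, that the lemma as stated in the paper is not true for adversarial point sets --- a segment can cross $\Theta(n)$ simplices, and in $d\ge 3$ the triangulation can even have superlinear size --- so the Poisson (or similarly ``well-distributed'') hypothesis you invoke is genuinely needed and is silently supplied by the surrounding section's assumptions, together with the normalization $\lambda=\Theta(n)$, $\lVert p-p'\rVert=O(1)$ that makes $\ell\lambda^{1/d}=O(n^{1/d})$.

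Two small corrections to your write-up. First, the additivity identity is only exact up to the cell containing the subdivision point, which is counted by both subsegments; so $f(\ell,\lambda)-1$ is the additive quantity and $f(\ell,\lambda)=1+\ell\,g(\lambda)$, which changes nothing asymptotically but should be stated. Second, in your Crofton aside the facet-area density of a Poisson--Delaunay tessellation of intensity $\lambda$ is $\Theta(\lambda^{1/d})$, not $\Theta(\lambda^{(d-1)/d})$: there are $\Theta(\lambda)$ cells per unit volume, each with $(d-1)$-dimensional surface measure $\Theta(\lambda^{-(d-1)/d})$, giving $\Theta(\lambda^{1/d})$ in total (the two expressions coincide only at $d=2$). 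The Crofton-type conversion then yields $\Theta(\ell\lambda^{1/d})$ expected facet crossings, consistent with your scaling argument; with the value you wrote, the aside would contradict your main derivation for $d\ge 3$.
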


\begin{figure}\label{fig:straight-walk}
	\centering
		\includegraphics[width=0.8\textwidth]{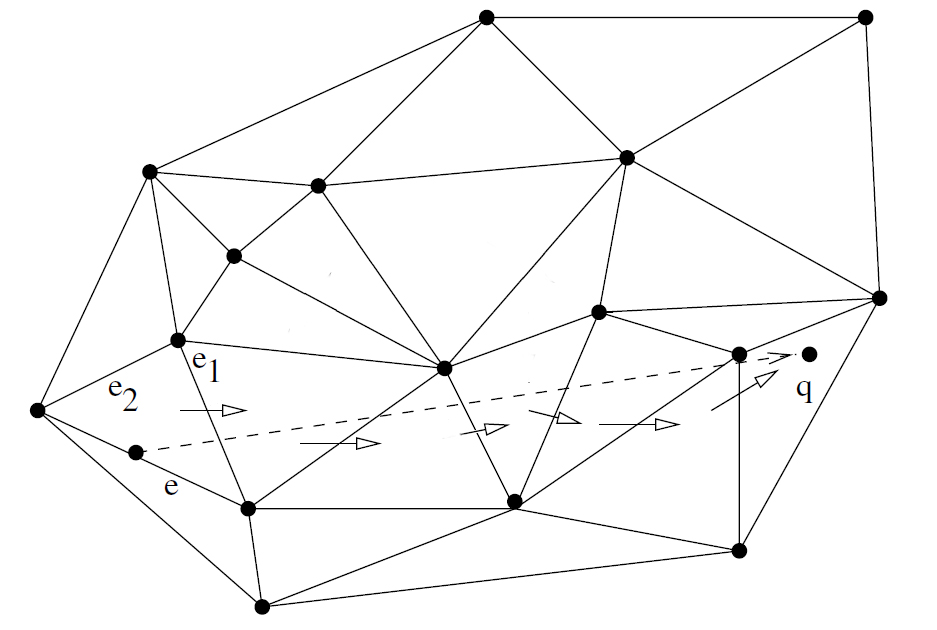}
	\caption{Illustration of the Straight Walk}
\end{figure}

\subsection{$(c,r)$-NN}
The Approximate Near Neighbor problem is introduced in \cite{Indyk1998} for solving the $kANN_c$ problem with $k=1$. Usually the Approximate Near Neighbor problem is denoted as $(c,r)$-NN since there are two input parameters $c$ and $r$. The definition is given below. The idea to use $(c,r)$-NN to solve $1ANN_c$ is via Turing reduction, that is, use $(c,r)$-NN as an oracle or sub-procedure. The details can be found in \cite{Indyk1998,Har-Peled2001,Har-Peled2012,Ma2018}.

\begin{definition}
	Given a point set $P$, a query point $q$, and two query parameters $c>1,r>0$, the output of the $(c,r)$-NN problem should satisfy:
	\begin{enumerate}
		\item if $\exists p^* \in S(q,r)\cap P$, then output a point $p' \in S(q,c\cdot r)\cap P$;
		\item if $D(p,q)> c \cdot r$ for $\forall p\in P$, then output $No$;
	\end{enumerate}
\end{definition}

Since we aim to solve kANN problem in this paper, we need the following definition of $(c,r)$-kNN.

\begin{definition}
	Given a point set $P$, a query point $q$, and two query parameters $c,r$, the output of the $(c,r)$-kNN problem is a set $kNN_{(c,r)}(q,P)$, which satisfies:
	\begin{enumerate}
		\item if $|P\cap S(q,r)|\ge k$, then output a set $Q \subseteq P\cap S(q,c\cdot r)$, where $|Q|=k$;
		\item if $|P\cap S(q,c\cdot r)|< k$, then output $\emptyset$;
	\end{enumerate}
\end{definition}

It can be easily seen that the $(c,r)$-kNN problem is a natural generalization of the $(c,r)$-NN problem. Recently, there are several algorithms proposed to solve this problem. The following Lemma \ref{lema:crknn-complexity} gives the complexity of the $(c,r)$-kNN algorithm, which will be proved in Appendix \ref{apdx:proof-crknn}.

\begin{lemma}\label{lema:crknn-complexity}
	There is an algorithm that solves $(c,r)$-kNN problem in $O(kn^\rho)$ of time, requiring $O(kn^{1+\rho}\log{n})$ time of preprocessing and $O(kn^{1+\rho})$ of space. The parameter $\rho$ is a constant depending on the LSH function used in the algorithm, and $\rho<1$ always holds.
\end{lemma}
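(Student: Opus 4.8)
The plan is to adapt the classical Indyk--Motwani LSH construction for $(c,r)$-NN to the task of retrieving $k$ points. Fix an $(r,\,cr,\,p_1,\,p_2)$-sensitive hash family $\mathcal H$ for the Euclidean metric --- for instance the $\ell_2$ family of Datar et al.\ or the asymptotically optimal family of Andoni et al.\ --- so that $\Pr_{h\in\mathcal H}[h(p)=h(q)]\ge p_1$ when $D(p,q)\le r$ and $\Pr_{h\in\mathcal H}[h(p)=h(q)]\le p_2$ when $D(p,q)>cr$, and put $\rho=\ln(1/p_1)/\ln(1/p_2)$, which is $<1$ for these families. The data structure consists of $L=\Theta(kn^{\rho})$ hash tables; table $i$ is keyed by $g_i=(h_{i,1},\dots,h_{i,m})$, a concatenation of $m=\lceil\log_{1/p_2}n\rceil$ independent draws from $\mathcal H$, and every $p\in P$ is stored in bucket $g_i(p)$ of every table $i$. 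Building the tables costs $O(Lnm)=O(kn^{1+\rho}\log n)$ expected time and they occupy $O(Ln)=O(kn^{1+\rho})$ space (the $O(Lm)$ space for storing the $g_i$'s is lower order), matching the claimed preprocessing and space bounds.

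On a query $(q,c,r)$ we evaluate $g_1(q),\dots,g_L(q)$ and scan the points appearing in the buckets $g_i(q)$; a scanned point $p$ is appended to an output list $Q$ iff $D(p,q)\le cr$. The scan stops as soon as $|Q|=k$ or $5L$ points have been examined, and we return $Q$ (truncated to $k$ points) if $|Q|\ge k$ and $\emptyset$ otherwise. Evaluating the hashes costs $O(Lm)$ and the capped scan costs $O(L)$ distance computations, so the query time is $O(kn^{\rho})$ up to the logarithmic factor in $m$, which is absorbed into the same $\log n$ bookkeeping used elsewhere in the paper.

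For correctness there are two cases. If $|P\cap S(q,cr)|<k$, then $Q$ never reaches size $k$, so the algorithm deterministically outputs $\emptyset$, as required. If $|P\cap S(q,r)|\ge k$, fix witnesses $p_1,\dots,p_k\in P\cap S(q,r)$. Each $p_j$ collides with $q$ in a given table with probability $p_1^{m}=\Omega(n^{-\rho})$, so it is missed by all $L$ tables with probability $(1-\Omega(n^{-\rho}))^{L}\le 1/(4k)$ once the constant in $L=\Theta(kn^{\rho})$ is large enough; a union bound gives that with probability $\ge 3/4$ all $k$ witnesses land in the scanned buckets, so those buckets contain at least $k$ points of $S(q,cr)$. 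On the other hand, for a point $p$ with $D(p,q)>cr$ the probability of colliding with $q$ in a given table is at most $p_2^{m}\le 1/n$, so the expected number of such ``far'' points encountered over all $L$ tables is at most $L$, and by Markov's inequality at most $4L$ of them are scanned with probability $\ge 3/4$. On the intersection of these two events (probability $\ge 1/2$), among the first $5L$ scanned points at least $5L-4L=L\ge k$ lie in $S(q,cr)$, or else the buckets are exhausted earlier and all $k$ witnesses are found; either way $|Q|\ge k$ at termination and the returned set is a valid answer. Hence the algorithm succeeds with probability $\ge 1/2$, which is amplified to any constant (or to $1-1/\mathrm{poly}(n)$, at the price of the extra $\log n$ factor that already appears in the main theorem) by independent repetition.

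The main obstacle is precisely the passage from one retrieved point to $k$: one must simultaneously ensure that \emph{every} promised near point is likely to be bucketed --- which is what forces the extra $\Theta(k)$ (or at least $\Theta(\log k)$) factor in the number of tables --- and that the scan is not swamped by collisions with points at distance $>cr$. Reconciling these two requirements is what the $5L$ cap together with the Markov bound on far collisions accomplishes, and the one point that must be checked carefully is that $L\ge k$, so that the cap is never reached before $k$ genuine neighbours have been collected.
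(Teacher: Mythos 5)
Your proposal is correct and follows essentially the same route as the paper's own proof: the same multi-table LSH construction with $m=\log_{1/p_2}n$ and $L=\Theta(kn^{\rho})$, the same two events (every witness in $S(q,r)$ collides in some table; the number of far collisions is controlled by Markov's inequality), and the same capped scan. The only differences are cosmetic --- you use a union bound over the $k$ witnesses where the paper multiplies per-point success probabilities, and a $5L$ cap instead of $3L$ --- and your write-up is, if anything, the more careful of the two.
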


\section{Algorithm}\label{sec:alg}
The proposed algorithm consists of two phases, i.e., the preprocessing phase and the query phase. The preprocessing phase is to built a data structure, which will be used to guide the search in the query phase. Next we will describe the algorithm of the two phases in detail.

\subsection{Preprocessing Algorithm}\label{subsec:preprocessing}
Before describing the details of the preprocessing algorithm, we first introduce several concepts that will be used in the following discussion.

\subsubsection{Axis Parallel Box.}
An axis parallel box $B$ in $R^d$ is defined to be the Cartesian product of $d$ intervals, i.e., $B=I_1\times I_2\times\cdots\times I_d$. And the following is the definition of Minimum Bounding Box.

\begin{definition}
	Given a point set $P$, the Minimum Bounding Box, denoted as $MBB(P)$, is the axis parallel box satisfying the following two requirements:
	
	\begin{enumerate}
		\item $MBB(P)$ encloses all points in $P$, and
		\item there exists points $p$ and $p'$ in $P$ such that $p^{(i)}=a_i, p'^{(i)}=b_i$ for each interval $I_i=(a_i,b_i)$ defining $MBB(P)$, $1\le i\le d$.
	\end{enumerate}
\end{definition}

\subsubsection{Median Split}
Given a point set $P$ and its minimum bounding box $MBB(P)$, we introduce an operation on $P$ that splits $P$ into two subsets, which is called median split. This operation first finds the longest interval $I_i$ from the intervals defining $MBB(P)$. Then, the operation finds the median of the set $\{p^{(i)}\mid p\in P \}$, which is the median of the $i$-th coordinates of the points in $P$. This median is denoted as $med_i(P)$. Finally $P$ is split into two subsets, i.e., $P_1=\{p\in P\mid p^{(i)}\le  med_i(P)\}$ and $P_2=\{p\in P\mid p^{(i)}>  med_i(P)\}$. Here we assume that no two points share the same coordinate in any dimension. This assumption can be assured by adding some random small shift on the original coordinates. 

\subsubsection{Median Split Tree}

By recursively conducting the median split operation, a point set $P$ can be organized into a tree structure, which is called the Median Split Tree (MST). The definition of MST is given below. 

\begin{definition}
	Given the input point set $P$, a Median Split Tree (MST) based on $P$, denoted as $MST(P)$, is a tree structure satisfying the following requirements:
	\begin{enumerate}
		\item the root of $MST(P)$ is $P$, and the other nodes in $MST(P)$ are subsets of $P$;
		\item there are two child nodes for each interior node $N\in MST(P)$, which are generated by conducting a median split on $N$;
		\item each leaf node contains only one point.
	\end{enumerate}
\end{definition}

\subsubsection{Balanced Median Split Tree}
The depth of a node $N$ in a tree $T$, denoted as $dep_T(N)$, is defined to be the number of edges in the path from $N$ to the root of $T$. It can be noticed that the leaf nodes in the $MST$ may have different depths. So we introduce the Balanced Median Split Tree (BMST), where all leaf nodes have the same depth. 

Let $L_T(i)=\{N\in T\mid dep_T(N)=i \}$, which is the nodes in the $i$-th layer in tree $T$, and $|N|$ be the number of points included in node $N$. For a median split tree $MST(P)$, it can be easily proved that either $|N|=\lceil n/2^i\rceil$ or $|N|=\lfloor n/2^i\rfloor$ for $\forall N\in L_{MST(P)}(i)$. Given $MST(P)$, the $BMST(P)$ is constructed as follows. Find the smallest $i$ such that $\lfloor n/2^i\rfloor\le 3$, then  for each node $N\in L_{MST(P)}(i)$, all the nodes in the sub-tree rooted at $N$ are directly connected to $N$.

\subsubsection{Hierarchical Delaunay Graph}
Given a point set $P$, we introduce the most important concept for the preprocessing algorithm in this paper, which is the Hierarchical Delaunay Graph (HDG). This structure is constructed by adding edges between nodes in the same layer of $BMST(P)$. The additional edges are called the graph edges, in contrast with the tree edges in $BMST(P)$. The definition of the HDG is given below. Here $Cen(N)$ denotes the center of $AMES(N)$.

\begin{definition}
	Given a point set $P$ and the balanced median split tree $BMST(P)$, a Hierarchical Delaunay Graph $HDG$ is a layered graph based on $BMST(P)$, where each layer is a Delaunay triangulation. Formally, for each $N,N'\in HDG(P)$, there is an graph edge between $N,N'$ iff
	\begin{enumerate}
		\item $dep_{BMST(P)}(N)=dep_{BMST(P)}(N')$, and
		\item  there exists a d-dimensional sphere $S$ passing through $Cen(N),Cen(N')$, and there is no $N''\in HDG(P)$ such that $Cen(N'')$ falls in $S$, where $N''$ is in the same layer with $N$ and $N'$. That is, the graph edges connecting nodes in the same layer forms the Delaunay Triangulation.
	\end{enumerate}
\end{definition}

\begin{algorithm}[t]
	\caption{Preprocessing Algorithm}\label{alg:pre}
	\KwIn{a point set $P$}
	\KwOut{a hierarchical Delaynay graph $HDG(P)$}
	\SetKwFunction{SplitTree}{SplitTree}
	\SetKwFunction{ComputeSpheres}{ComputeSpheres}
	\SetKwFunction{AMES}{AMES}
	\SetKwFunction{Delaunay}{Delaunay}
	\SetKwFunction{HierarchicalDelaunay}{HierarchicalDelaunay}
	\SetKwProg{Procedure}{Procedure}{\string:}{end}
	$T\gets$\SplitTree{P}\;
	Modify $T$ into a BMST\;
	\ComputeSpheres{T}\;
	
	\HierarchicalDelaunay{T}\;
	
	\Procedure{\SplitTree{N}}{
		Conduct median split on $N$ and generate two sets $N_1$ and $N_2$\;
		$T_1\gets$\SplitTree{$N_1$}\;
		$T_2\gets$\SplitTree{$N_2$}\;		
		Let $T_1$ be the left sub-tree of $N$, and $T_2$ be the right sub-tree of $N$;
	}
	\Procedure{\ComputeSpheres{T}}{
		\ForEach{$N\in T$}{
			Call \AMES{$N,0.1$} (Algorithm \ref{alg:ames})\;
		}
	}
	\Procedure{\HierarchicalDelaunay{$T$}}{
		Let $dl$ be the depth of the leaf node in $T$\;
		\For{$i=0$ to $dl$}{
			\Delaunay{$L_T(i)$} (Lemma \ref{lema:dt-time})\;
		}
	}
	
\end{algorithm}

\subsubsection{The preprocessing algorithm}
Next we describe the preprocessing algorithm which aims to build the HDG. The algorithm can be divided into three steps.

Step 1, Split and build tree. The first step is to recursively split $P$ into smaller sets using the median split operation, and the median split tree is built. Finally the nodes near the leaf layer is adjusted to satisfy the definition of the balanced median split tree.

Step 2, Compute Spheres. In this step, the algorithm will go over the tree and compute the AMES for each node using Algorithm \ref{alg:ames}.
 
Step 3, Construct the $HDG$. In this step, an algorithm given in \cite{Buchin2009} which  satisfies Lemma \ref{lema:dt-time} is invoked to compute the Delaunay triangulation for each layer.

The pseudo codes of the preprocessing algorithm is given in Algorithm \ref{alg:pre}.

\subsection{Query Algorithm}
The query algorithm takes the $HDG$ built by the preprocessing algorithm, and executes the following three steps. 

The first is the descending step. The algorithm goes down the tree and stops at level $i$ such that $k\le n/2^i< 2k$. At each level, the child node with smallest distance to the query is chosen to be visited in next level. 

The second is the navigating step. The algorithm marches towards the local nearest AMES center by moving on the edges of the $HDG$. 

The third step is the answering step. The algorithm finds the answer of $kANN_{c,\delta}(q,P)$ by invoking the $(c,r)$-kNN query. The answer can satisfy the distance criterion or the recall criterion according to the different return result of the $(c,r)$-kNN query. 

Algorithm \ref{alg:query} describes the above process in pseudo codes, where $Cen(N)$ and $Rad(N)$ are the center and radius of the $AMES$ of node $N$, respectively.

\begin{algorithm}[t]
	\caption{Query}\label{alg:query}
	\KwIn{a query points $q$, a point set $P$, approximation factors $c>1,\delta <1$, and $HDG(P)$}
	\KwOut{$kANN_{c,\delta}(q,P)$}
	$N\gets$ the root of $HDG(P)$\;
	\While{$|N|>2k$}{
		$Lc\gets$ the left child of $N$, $Rc\gets$ the right child of $N$\;
		\uIf{$D(q,Cen(Lc))<D(q,Cen(Rc)))$}{
			$N\gets Lc$\;
		}\Else{
			$N\gets Rc$\;
		}
	}
	
	\While{$\exists N'\in Nbr(N)$ s.t. $D(q,Cen(N'))<D(q,Cen(N)))$}{
		$N\gets \arg\min\limits_{N'\in Nbr(N)}\{ D(q,Cen(N')) \}$\;
	}
	\For{$i=0$ to $\log_c{n}$}{\label{line:query:for-start}
		Invoke $(c,r)$-kNN query where $r=\frac{D(q,Cen(N))+Rad(N)}{n}c^i$\;
		\If{the query returned a set $Res$}{
			return $Res$ as the final result\;
		}
	}\label{line:query:for-end}
	
\end{algorithm}

\section{Analysis}\label{sec:analyz}
The analysis in this section will assume that the input point set $P$ follows the spatial Poisson process.

\subsection{Correctness}

\begin{lemma}\label{lema:res-d}
	If Algorithm \ref{alg:query} terminates when $i=0$, then the returned point set $Res$ is a $\delta$-kNN of $q$ in $P$ with at least $1- e^{-\frac{n-k}{n^d}}$ probability.
\end{lemma}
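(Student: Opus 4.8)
The plan is to unpack what ``terminating at $i=0$'' provides, reduce the recall guarantee to a counting statement about a tiny ball centred at $q$, and then estimate the probability of the bad configuration using the Poisson property of $P$ (Lemma~\ref{lema:poisson-process}).

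First, suppose Algorithm~\ref{alg:query} returns at $i=0$. Then the successful $(c,r)$-kNN call used $r_0=\big(D(q,Cen(N))+Rad(N)\big)/n$, so the returned set satisfies $|Res|=k$ and $Res\subseteq P\cap S(q,cr_0)$, and --- since $(c,r)$-kNN outputs $\emptyset$ whenever $|P\cap S(q,cr_0)|<k$ --- also $m:=|P\cap S(q,cr_0)|\ge k$. Since $S(q,cr_0)$ is centred at $q$, every point of $P$ inside it is at least as close to $q$ as every point of $P$ outside it; with $m\ge k$ this forces $kNN(q,P)\subseteq P\cap S(q,cr_0)$. Consequently
\[
|Res\cap kNN(q,P)|\ \ge\ |Res|+|kNN(q,P)|-|P\cap S(q,cr_0)|\ =\ 2k-m ,
\]
so $Res$ is a $\delta$-kNN whenever $m\le(2-\delta)k$; in particular it suffices that $m=k$, i.e. that $S(q,cr_0)$ contains no point of $P$ outside $kNN(q,P)$. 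Hence the event ``Algorithm~\ref{alg:query} terminates at $i=0$ and $Res$ is not a $\delta$-kNN'' is contained in the event ``some point of $P\setminus kNN(q,P)$ lies in $S(q,cr_0)$''.

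Next, bound the probability of that event. Condition on the node $N$ returned by the descending and navigating phases (which fixes $r_0$), on $kNN(q,P)$, and on $t:=T_k(q,P)$; the points of $P$ outside $kNN(q,P)$ then lie outside $S(q,t)$, so the bad event is that the annulus $S(q,cr_0)\setminus S(q,t)$ contains a point of $P$. By the independence of a Poisson process on disjoint regions, this annulus is empty with probability $e^{-\mu}$, where $\mu$ is the expected number of points of $P$ in the annulus, so $\mu\le E\big[|P\cap S(q,cr_0)|\big]$. Writing $R=D(q,Cen(N))+Rad(N)$ we have $S(q,cr_0)=S(q,cR/n)$, and $A:=S(q,cR)$ contains both $N$ and $S(q,cr_0)$; Lemma~\ref{lema:poisson-process} then gives that each point of $P$ lying in $A$ lies in $S(q,cr_0)$ with probability $\mathrm{vol}(S(q,cr_0))/\mathrm{vol}(A)=1/n^d$, so $E[|P\cap S(q,cr_0)|]\le E[|P\cap A|]/n^d\le(n-k)/n^d$, the last step using that $A$ is a small region and therefore excludes at least $k$ of the $n$ points in expectation. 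Thus $Res$ is a $\delta$-kNN with probability at least $e^{-\mu}\ge e^{-(n-k)/n^d}\ge 1-e^{-(n-k)/n^d}$, which is the claim.

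The step I expect to be the real obstacle is the probability estimate, precisely because $r_0$, via $R=D(q,Cen(N))+Rad(N)$, is a function of the random point set and of the path taken in the first two phases, so the reference ball $A=S(q,cR)$ and the conditioning on $kNN(q,P)$ are not independent; one must condition on the outcome $N$ of the descending/navigating phases before invoking the Poisson property on the points that remain. It also has to be verified that $R$ is genuinely small, so that $A=S(q,cR)$ misses at least $k$ points in expectation --- this rests on $N$ being a node with at most $2k$ points near the bottom of the $BMST$ (so its bounding region, and hence $Rad(N)$, is small) and on the navigating phase stopping at a centre close to $q$. The set-up and the combinatorial reduction are routine once the $(c,r)$-kNN specification is spelled out.
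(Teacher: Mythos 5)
Your proof follows the same overall strategy as the paper's: reduce failure of the recall criterion to the event that a tiny ball around $q$ of radius proportional to $R/n$, with $R=D(q,Cen(N))+Rad(N)$, captures points of $P$ beyond the $k$ nearest, then bound that event via the volume-ratio property of the Poisson process (Lemma~\ref{lema:poisson-process}), which yields the per-point probability $1/n^d$ in both arguments. The differences are in execution, and they mostly favor you. Your inclusion--exclusion step $|Res\cap kNN(q,P)|\ge 2k-m$ makes explicit the link between ``extra points in the small ball'' and ``recall fails''; the paper leaves this implicit behind the relation $\Pr[C\mid A]=\Pr[C\mid B,A]\Pr[B\mid A]$, which is not an identity as written and never ties its threshold $t$ to $\delta$. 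You also work with the ball $S(q,cr_0)$ from which the $(c,r)$-kNN query actually draws its output, whereas the paper analyzes $S(q,R_0)$ and silently drops the factor $c$ (which would turn its $1/n^d$ into $c^d/n^d$). Your void-probability computation $e^{-\mu}$ replaces the paper's binomial-tail estimate; both rest on the same $1/n^d$ bound. The caveats you flag yourself --- that $R$ is data-dependent, so one must condition on the node $N$ before invoking the Poisson property, and that $S(q,cR)$ must be small enough to exclude at least $k$ points --- are genuine, but they are equally unaddressed in the paper, which uses $n-k$ candidate points without justification and ignores the conditioning entirely. One small repair on your side: the closing chain $e^{-\mu}\ge e^{-(n-k)/n^d}\ge 1-e^{-(n-k)/n^d}$ needs $(n-k)/n^d\le\ln 2$, which can fail for $d=1$; it is cleaner to bound the failure probability by $1-e^{-\mu}\le\mu\le(n-k)/n^d$, which for $d\ge 2$ already implies (and improves on) the stated guarantee $1-e^{-(n-k)/n^d}$.
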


\begin{proof}
	Let $R=D(q,Cen(N))+Rad(N)$, $R_0=R/n$, $t\in [0,k]$ be an integer. We define the following three events.
	$$A=\{|P\cap S(q,R_0)|\ge k\} $$
	$$B=\{|P\cap S(q,R_0)|\ge k+t \}$$
	$$C=\{Res\cap kNN(q,P)|\le \delta\cdot k \}$$
	
	The lemma states the situation that the algorithm returns at $i=0$, which implies that event $A$ happens. Event $C$ represents the situation that $Res$ is a $\delta$-kNN set. Then it is easy to see that the desired probability is in this lemma is $1-\Pr[C\mid A]$. By the formula of conditional probability, 
	$$\Pr[C\mid A]=\Pr[C\mid B,A]\Pr[B\mid A]\le \Pr[B\mid A].$$
	Thus in the rest of the proof we focus on calculate $\Pr[B\mid A]$.
	
	To calculate $\Pr[B\mid A]$ we need the probability that a single point $p$ falls in $S(q,R_0)$. We have the following calculations.
	
	\[\begin{aligned}
	\Pr[p\in S(q,R_0)]& = &\Pr[p\in S(q,R_0)\mid p\in S(q,R)]\cdot \Pr[p\in S(q,R)] \\
	& \le& \Pr[p\in S(q,R_0)\mid p\in S(q,R)]\\
	& =& 1/n^d
	\end{aligned}\]

	The last equation is based on Lemma \ref{lema:poisson-process}. 
	
	On the other hand, the number of points in $S(q,R)$ is at most $n$. Here we use the trivial upper bound of $n$ since it is sufficient to the proof. Denote $P=\Pr[p\in S(q,R_0)]$, we have the following equations.
	
	$$\Pr[B\mid A] \le \binom{n-k}{t}P^t(1-P)^{n-k-t}
	 \le e^{-P(n-k)}\frac{P^t(n-k)^t}{t!} $$

	By the property of the Poisson Distribution, the above equation achieves the maximum when $t=\lfloor(n-k)P\rfloor=\lfloor(n-k)/n^d\rfloor=0$. Thus we have $\Pr[B\mid A]\le e^{-\frac{n-k}{n^d}}$. 
	
	Finally, combining the above analysis, we achieve the result that $Res$ is a $\delta$-kNN set with at least $1- e^{-\frac{n-k}{n^d}}$ probability.
	\qed
\end{proof}

\begin{lemma}\label{lema:res-c}
	If Algorithm \ref{alg:query} returns at $i>0$, then the returned point set $Res$ is a $c$-kNN of $q$ in $P$.
\end{lemma}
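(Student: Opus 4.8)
The plan is to read the conclusion off directly from the semantics of the $(c,r)$-kNN oracle together with the observation that the answering step (lines~\ref{line:query:for-start}--\ref{line:query:for-end}) sweeps a geometrically growing sequence of query radii. Write $R=D(q,Cen(N))+Rad(N)$ for the quantity fixed after the descending and navigating steps, and $r_i=\frac{R}{n}\,c^{i}$ for the radius used in iteration $i$; since $c>1$ these radii are strictly increasing and, crucially, $c\cdot r_{i-1}=r_i$. The two facts I would combine are: (i) because the loop returns at $i>0$, iteration $i-1\ge 0$ was executed and its $(c,r_{i-1})$-kNN call returned $\emptyset$, which by the definition of $(c,r)$-kNN certifies that $S(q,c\cdot r_{i-1})$, a ball that equals $S(q,r_i)$, contains fewer than $k$ points of $P$; and (ii) iteration $i$'s $(c,r_i)$-kNN call returned the set $Res$, so by the same definition $|Res|=k$ and $Res\subseteq P\cap S(q,c\cdot r_i)$.

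From (i), strictly fewer than $k$ points of $P$ lie within distance $r_i$ of $q$, so the $k$-th nearest neighbour of $q$ lies strictly outside $S(q,r_i)$, i.e. $T_k(q,P)>r_i$. Combining with (ii), every $p'\in Res$ satisfies $D(p',q)\le c\cdot r_i<c\cdot T_k(q,P)$; together with $Res\subseteq P$ and $|Res|=k$ this is precisely what Definition~\ref{def:cknn} requires, so $Res$ is a $c$-kNN of $q$ in $P$. I would also remark that neither the descending nor the navigating step plays any role here: they only fix the value $R$, hence where the radius sweep starts, but once a non-empty answer is produced its distance guarantee is forced purely by the interleaving of the consecutive oracle calls.

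The single delicate point — and the only place I expect real work — is fact (i): one must be certain that an $\emptyset$ answer of the $(c,r)$-kNN oracle at iteration $i-1$ guarantees fewer than $k$ input points inside the \emph{outer} ball $S(q,c\cdot r_{i-1})$, not merely inside the inner ball $S(q,r_{i-1})$, since only the former gives the bound $T_k(q,P)>r_i$ used above (the latter weakens it by a factor $c$). I would therefore tie this step to the precise behaviour of the $(c,r)$-kNN algorithm behind Lemma~\ref{lema:crknn-complexity}, which is designed to report $\emptyset$ exactly when $|P\cap S(q,c\cdot r)|<k$; with that nailed down the lemma is immediate.
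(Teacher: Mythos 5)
Your overall decomposition---use the empty answer at iteration $i-1$ to lower-bound $T_k(q,P)$ and the non-empty answer at iteration $i$ to upper-bound the distances of the returned points, then take the ratio of consecutive radii---is exactly the paper's. The gap is the one you flagged yourself, and your proposed resolution does not go through. The $(c,r)$-kNN specification is a promise problem with a gap: the oracle \emph{must} return a $k$-set when $|P\cap S(q,r)|\ge k$ and \emph{must} return $\emptyset$ when $|P\cap S(q,c\cdot r)|<k$, but in the intermediate regime $|P\cap S(q,r)|<k\le|P\cap S(q,c\cdot r)|$ it is free to do either. Consequently an $\emptyset$ answer at radius $r_{i-1}$ certifies only, by contraposition of the first clause, that $|P\cap S(q,r_{i-1})|<k$; it certifies nothing about the outer ball $S(q,c\cdot r_{i-1})$. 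Your claim that the algorithm behind Lemma~\ref{lema:crknn-complexity} reports $\emptyset$ ``exactly when'' $|P\cap S(q,c\cdot r)|<k$ is not true of that procedure: it returns $No$ as soon as it has scanned $3L$ candidates without collecting $k$ points of $S(q,c\cdot r)$, which can legitimately happen in the gap regime (and, the procedure being Monte Carlo, occasionally even outside it). So your fact (i), which is the load-bearing step, is unavailable.

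With the certificate that is actually available, namely $T_k(q,P)>r_{i-1}$, and with the returned points lying in $S(q,c\cdot r_i)$ as you correctly state in (ii), the ratio you obtain is $c\cdot r_i/r_{i-1}=c^2$, not $c$. (For what it is worth, the paper's own proof of Lemma~\ref{lema:res-c} contains the mirror-image slip: it correctly uses the inner-ball reading of the empty answer to get $T_k(q,P)\ge R_{i-1}$, but then asserts that the set returned at iteration $i$ is contained in $S(q,R_i)$ rather than in $S(q,c\cdot R_i)$; carried out carefully, that argument also yields only a $c^2$ guarantee.) To genuinely obtain factor $c$ one needs either an oracle whose $\emptyset$ answer certifies sparsity of the \emph{outer} ball, or a radius sweep whose consecutive radii differ by a factor $\sqrt{c}$ while each oracle call is issued with approximation parameter $\sqrt{c}$.
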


\begin{proof}
	Let $R_{i-1}=\frac{D(q,Cen(N))+Rad(N)}{n}c^{i-1}$ and $R_i=\frac{D(q,Cen(N))+Rad(N)}{n}c^i$, which are the input parameter of the $(i-1)$-th and $i$-th invocation of the $(c,r)$-kNN query. The lemma states the situation that the algorithm returns at the $i$-th loop, which implies that the $(c,r)$-kNN query returns empty set in the $(i-1)$-th loop. According to the definition of the $(c,r)$-kNN problem, the number of points is less than $k$ in the d-dimensional sphere $S(q,R_{i-1})$. Denote $T_k(q,P)=\max\limits_{p\in kNN(q,P)}{D(q,p)}$, then it can be deduced that $T_k(q,P)\ge R_{i-1}$. On the other hand, the algorithm returns at the $i$-th loop, which implies that the $(c,r)$-kNN query returns a subset of $P\cap S(q,R_i)$. Thus we have $D(p,q)\le R_i$ for each $p$ in the result. Finally, $D(q,p)/T_k\le R_i/R_{i-1}=c$, which exactly satisfies the definition of $c$-kNN.
	\qed
\end{proof}

\begin{theorem}
	The result of Algorithm \ref{alg:query} satisfies the requirement of $kNN_{c,\delta}(q,P)$ with at least $1- e^{-\frac{n-k}{n^d}}$ probability.
\end{theorem}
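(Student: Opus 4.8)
The plan is to combine the two preceding lemmas by a case analysis on the loop index $i$ at which Algorithm \ref{alg:query} returns. The algorithm's for-loop runs $i$ from $0$ to $\log_c n$, and at each iteration it invokes the $(c,r)$-kNN query with radius $r = \frac{D(q,Cen(N))+Rad(N)}{n}c^i$, returning as soon as a nonempty set $Res$ is produced. So the output is determined by the first index $i$ at which the oracle does not return $\emptyset$. There are exactly two cases: either $i=0$ or $i>0$.

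First I would handle the case $i>0$. By Lemma \ref{lema:res-c}, in this case the returned set $Res$ is a $c$-kNN of $q$ in $P$, i.e., every $p'\in Res$ satisfies $D(p',q)\le c\cdot T_k(q,P)$, and $|Res|=k$ with $Res\subseteq P$. This is exactly the distance criterion in Definition \ref{def:cdknn}, so the output satisfies the required specification with probability $1$ in this case. Next I would handle the case $i=0$: by Lemma \ref{lema:res-d}, with probability at least $1-e^{-\frac{n-k}{n^d}}$ the returned set $Res$ is a $\delta$-kNN of $q$ in $P$, i.e., $|Res\cap kNN(q,P)|\ge \delta\cdot k$, which is the recall criterion in Definition \ref{def:cdknn}. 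Combining the two cases, in every situation the output satisfies at least one of the two criteria with probability at least $1-e^{-\frac{n-k}{n^d}}$ (the worst case being $i=0$, where the guarantee is probabilistic rather than deterministic).

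I would also need to dispatch one small loose end: the possibility that the for-loop terminates without the oracle ever returning a nonempty set, in which case no $Res$ is returned at all. I would argue this does not happen — by the time $i$ reaches $\log_c n$, the radius $r$ has grown to $D(q,Cen(N))+Rad(N)$, which (since $Cen(N)$ is the center and $Rad(N)$ the radius of an enclosing sphere of a node containing at least $k$ points, namely a node $N$ with $|N|\ge k$ reached by the descending and navigating steps) contains at least $k$ points of $P$, so $|P\cap S(q,c\cdot r)|\ge k$ and the oracle cannot return $\emptyset$; hence the algorithm always returns some set within the loop. Strictly, the only subtle point here is that the value of $|N|$ at the end of the navigating step is still at least $k$: the descending loop stops with $k\le |N| < 2k$ and the navigating loop only moves between nodes in the same layer, which all have size within one of each other, so $|N|\ge k$ is preserved.

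The main obstacle is essentially bookkeeping rather than a deep argument: the theorem is a direct corollary of Lemmas \ref{lema:res-d} and \ref{lema:res-c}, so the real content has already been established. The one place that requires genuine care is verifying that the termination radius is large enough to force a nonempty return (equivalently, that $Res$ is always well-defined), since Lemmas \ref{lema:res-d} and \ref{lema:res-c} both presuppose that the algorithm \emph{does} return at some $i$. I would make sure the claim $|N|\ge k$ after the navigating step is justified from the stopping condition of the descending loop together with the fact that all nodes in a fixed layer of $BMST(P)$ have cardinality $\lceil n/2^i\rceil$ or $\lfloor n/2^i\rfloor$.
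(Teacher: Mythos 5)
Your proposal is correct and takes essentially the same approach as the paper, whose proof is simply the one-line observation that the theorem follows by combining Lemma \ref{lema:res-d} (the $i=0$ case, giving the recall criterion with the stated probability) and Lemma \ref{lema:res-c} (the $i>0$ case, giving the distance criterion deterministically). Your additional check that the loop always returns a nonempty set before $i$ reaches $\log_c n$ is a worthwhile detail that the paper silently omits.
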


\begin{proof}
	The result can be directly deduced by combining Lemma \ref{lema:res-d} and \ref{lema:res-c}.
	\qed
\end{proof}

\subsection{Complexities}
For ease of understanding, We first analyze the complexity of the single steps in Algorithm \ref{alg:pre} and \ref{alg:query}.

\begin{lemma}\label{lema:pre-step1}
	The first step of Algorithm \ref{alg:pre} takes $O(n\log{n})$ time.
\end{lemma}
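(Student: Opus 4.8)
The plan is to bound the cost of the \texttt{SplitTree} recursion level by level, and then observe that the BMST conversion is comparatively cheap. First I would analyze the cost of one median split on a node $N$ with $|N|=m$ points in $R^d$. Computing $MBB(N)$ together with its longest defining interval takes one pass over the $m$ points, i.e.\ $O(dm)$ time; extracting the median $med_i(N)$ of the $i$-th coordinates can be done in $O(m)$ time with a deterministic linear-time selection routine (median of medians); and forming $N_1=\{p\in N\mid p^{(i)}\le med_i(N)\}$ and $N_2=N\setminus N_1$ is one more $O(m)$ pass. Hence a single median split at $N$ costs $O(dm)$, which is $O(m)$ when $d$ is treated as a constant.

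Next I would set up the recursion. Each median split sends a node of size $m$ to two children of sizes $\lfloor m/2\rfloor$ and $\lceil m/2\rceil$, so (as already noted in the text) the nodes at level $i$ of $MST(P)$ have size $\lceil n/2^i\rceil$ or $\lfloor n/2^i\rfloor$; consequently the tree has depth $dl=\lceil\log_2 n\rceil$, and the sizes of the nodes at any fixed level sum to at most $n$. Summing the per-node cost $O(dm)$ over all nodes at a fixed level gives $O(dn)$ work per level, and summing over the $O(\log n)$ levels gives $O(dn\log n)=O(n\log n)$ for constant $d$. This accounts for the \texttt{SplitTree} call in Algorithm~\ref{alg:pre}.

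Finally I would handle ``Modify $T$ into a BMST''. This only requires locating the smallest level $i$ with $\lfloor n/2^i\rfloor\le 3$, which equals $\lceil\log_2 n\rceil-O(1)$ and is found in $O(1)$ time, and then, for each node $N$ on that level, redirecting the tree edges so that every descendant of $N$ in $MST(P)$ becomes a direct child of $N$. Since $MST(P)$ has $O(n)$ nodes in total, a single postorder traversal performs this rewiring in $O(n)$ time. Adding the two contributions yields the claimed $O(n\log n)$ bound.

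The one point to be careful about is the per-node cost: a naive implementation that sorts each node's coordinates would incur $O(m\log m)$ per node and hence $O(n\log^2 n)$ overall, so the argument genuinely relies on using a linear-time median subroutine, and on absorbing the $O(d)$ coordinate-scanning factor into the constant by treating $d$ as fixed. Everything else --- the depth bound, the per-level size bound, and the linear-time BMST conversion --- is routine bookkeeping on the recursion tree.
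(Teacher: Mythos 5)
Your proof is correct and follows essentially the same route as the paper: the paper simply writes the recurrence $T(n)=2T(n/2)+O(n)$ (with the $O(n)$ term attributed to splitting and median-finding) and solves it, which is exactly your level-by-level summation in recurrence form. Your additional care about using a linear-time selection routine (to avoid an $O(n\log^2 n)$ bound from sorting) and about the $O(n)$ cost of the BMST conversion fills in details the paper leaves implicit, but does not change the argument.
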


\begin{proof}
	The following recursion formula can be easily deduced from the pseudo codes of Algorithm \ref{alg:pre}.
	$$T(n)=2T(n/2)+O(n)$$
	The $O(n)$ term comes from the time of splitting and computing the median. This recursion formula can be solved by standard process, and the result is $T(n)=O(n\log{n})$.	
	\qed
\end{proof}

\begin{lemma}\label{lema:pre-step2}
	The second step of Algorithm \ref{alg:pre} takes $O(n\log{n})$ time.
\end{lemma}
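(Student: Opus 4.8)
The plan is to bound the running time of Step~2 of Algorithm~\ref{alg:pre}, i.e.\ the \emph{ComputeSpheres} procedure, by summing the cost of the individual $AMES$ computations over all nodes of the tree $T$. First I would apply Lemma~\ref{lema:ames-time} with the fixed approximation factor $\epsilon=0.1$: for a node $N$, computing $AMES(N,0.1)$ via Algorithm~\ref{alg:ames} takes $O(|N|/0.1^2)=O(|N|)$ time. Hence the total time of Step~2 is $O\!\left(\sum_{N\in T}|N|\right)$, and the whole task reduces to showing $\sum_{N\in T}|N|=O(n\log n)$.

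Next I would organize this sum by the layers of the balanced median split tree. The key observation is that for every genuine split layer $j$ — every layer above the flattened bottom of the $BMST$ — the nodes of $L_T(j)$ form a partition of $P$, since a median split partitions a node into its two children; therefore $\sum_{N\in L_T(j)}|N|=n$ for each such $j$. Because each median split roughly halves the size of a node (as recorded in the paper, $|N|\in\{\lceil n/2^j\rceil,\lfloor n/2^j\rfloor\}$ for $N\in L_{MST(P)}(j)$), after $O(\log n)$ layers every node contains at most $3$ points, so $T$ has only $O(\log n)$ layers. The contribution of all split layers to the sum is thus $O(\log n)\cdot n=O(n\log n)$.

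Then I would treat the flattened bottom of the $BMST$ separately. Let $i$ be the smallest index with $\lfloor n/2^i\rfloor\le 3$. Each node $N\in L_{MST(P)}(i)$ has $|N|=O(1)$; there are $O(2^i)=O(n)$ such nodes; and the sub-tree of $MST(P)$ rooted at such an $N$ — whose nodes become direct children of $N$ in the $BMST$ — is a binary tree with at most $3$ leaves, hence has $O(1)$ nodes, each containing $O(1)$ points. Consequently the entire bottom portion of the $BMST$ comprises $O(n)$ nodes, each contributing $O(1)$ to the sum, for a total of $O(n)$. Adding the two parts yields $\sum_{N\in T}|N|=O(n\log n)$, and multiplying by the per-point cost from Lemma~\ref{lema:ames-time} gives the claimed $O(n\log n)$ bound.

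I do not expect any hard estimate here; the only real care needed is bookkeeping. One must apply the ``layers partition $P$'' fact only to the true split layers, and must verify explicitly that the collapsed sub-trees at the bottom of the $BMST$ contribute only $O(n)$ in total rather than something like $O(n\log n)$ — which is fine precisely because each such sub-tree has constant size. Everything else is a direct combination of Lemma~\ref{lema:ames-time} with the size-halving property of the median split.
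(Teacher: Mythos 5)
Your proof is correct and is essentially the paper's argument: the paper charges $O(|N|)$ per node via Lemma~\ref{lema:ames-time} and expresses the total as the recurrence $T(n)=2T(n/2)+O(n)$, which when unrolled is exactly your layer-by-layer sum of $n$ per level over $O(\log n)$ levels. Your explicit treatment of the flattened bottom of the $BMST$ (contributing only $O(n)$) is a small extra piece of care that the paper's recurrence glosses over, but it does not change the approach or the bound.
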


\begin{proof}
	According to lemma \ref{lema:ames-time}, the time to compute the AMES of a point set is proportional to the number of points in this set. Thus we have the following recursion formula:
	$$T(n)=2T(n/2)+O(n)$$
	The answer is also $T(n)=O(n\log{n})$.
	\qed
\end{proof}

\begin{lemma}\label{lema:pre-step3}
	The third step of Algorithm \ref{alg:pre} takes $O(n\log{n})$ time.
\end{lemma}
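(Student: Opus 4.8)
The plan is to charge the entire running time of Step~3 to the \texttt{HierarchicalDelaunay} procedure, which is the only work it performs. That procedure loops over the layers of the (by now balanced) tree $T$: for each depth $i$ from $0$ to the leaf depth $dl$ it calls \texttt{Delaunay}$(L_T(i))$, i.e.\ it builds the Delaunay triangulation of the set $\{Cen(N)\mid N\in L_T(i)\}$ of $AMES$ centers sitting at that layer. Hence the total time is $\sum_{i=0}^{dl}(\text{cost of triangulating layer } i)$, and I need two ingredients: a bound on $|L_T(i)|$ and on $dl$, and a per-layer cost bound supplied by Lemma~\ref{lema:dt-time}.

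First I would nail down the combinatorics of the layers. Using the balance property already recorded in the paper --- every node at depth $i$ of an $MST$ holds $\lceil n/2^i\rceil$ or $\lfloor n/2^i\rfloor$ points --- together with the $BMST$ construction rule (collapse everything below the smallest depth at which $\lfloor n/2^i\rfloor\le 3$), one gets $dl=O(\log n)$, hence $2^{dl}=O(n)$, and $|L_T(i)|=O(2^i)$ for every $i\le dl$. Then, invoking Lemma~\ref{lema:dt-time}, triangulating a layer of $m$ points costs $O(m\log m)$ in expectation, so layer $i$ costs $O(2^i\cdot i)$. Summing, $\sum_{i=0}^{dl}O(2^i i)=O(2^{dl}\,dl)=O(n\log n)$, since the geometric-type series $\sum_i 2^i i$ is within a constant factor of its last term. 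This gives the claimed bound.

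The step I expect to be the real obstacle is justifying the use of Lemma~\ref{lema:dt-time} in this setting. That lemma assumes the point set being triangulated is drawn from the spatial Poisson process, whereas the points triangulated at layer $i$ are the $AMES$ centers $Cen(N)$ of the median-split subsets of $P$, not points of $P$ itself; and for arbitrary configurations in $R^d$ a $d$-dimensional Delaunay triangulation can cost far more than $O(m\log m)$, so the distributional hypothesis cannot simply be waved away. I would therefore need to argue that, conditioned on $P$ following the spatial Poisson process, the family of centers in each layer inherits enough randomness and independence --- each $Cen(N)$ being confined to a small, essentially prescribed cell fixed by the splits, and the cells of a layer being disjoint --- for the expected $O(m\log m)$ construction time of \cite{Buchin2009} to still hold, at worst up to a constant factor. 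Should that coupling prove too delicate, a fallback is to appeal to a construction-time bound for point sets of bounded spread, a property the median split enforces by design; that keeps the per-layer cost at $O(m\log m)$ and leaves the rest of the summation argument untouched.
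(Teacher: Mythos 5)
Your proof follows essentially the same route as the paper's: $O(2^i)$ nodes in layer $i$, an expected $O(m\log m)$ Delaunay construction per layer via Lemma \ref{lema:dt-time}, and the sum $\sum_{i=0}^{O(\log n)} i\cdot 2^i = O(n\log n)$. The additional concern you raise --- that Lemma \ref{lema:dt-time} assumes the triangulated points follow the spatial Poisson process, whereas each layer actually consists of $AMES$ centers of median-split cells --- is a real gap that the paper itself passes over in silence, so your treatment is if anything more careful than the published one.
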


\begin{proof}
	According to the definition and the building process of the $HDG$, there are $2^i$ nodes in the $i$-th layer. And by Lemma \ref{lema:dt-time}, the time to build the Delaunay triangulation is $O(n\log{n})$. Thus, the time complexity of the third step is represented by the following equation.
	$$\sum\limits_{i=0}^{\log{n}} {2^i\log{2^i}}=\sum\limits_{i=0}^{\log{n}} {i\cdot 2^i}$$
	
	This result of this additive equation is $O(n\log{n})$.
	\qed
\end{proof}

\begin{lemma}\label{lema:query-navi}
	The navigating step (second step) in Algorithm \ref{alg:query} needs $O(dn^{1/d}\log{n})$ time.
\end{lemma}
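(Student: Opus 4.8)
The plan is to write the running time of the navigating step as (number of iterations of the second \texttt{while} loop) $\times$ (cost of one iteration) and to bound the two factors separately. First I would observe that the loop stays inside the single layer $L_{HDG(P)}(i)$ reached by the descending step; since that step stops with $k\le n/2^{i}$, this layer has $m=2^{i}\le n/k\le n$ nodes, and by construction its graph edges form the Delaunay triangulation of the centers $\{Cen(N):N\in L_{HDG(P)}(i)\}$. Hence one iteration merely evaluates $D(q,Cen(N'))$ over all Delaunay neighbours $N'\in Nbr(N)$ of the current node $N$ and moves to the minimiser.

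For the cost of a single iteration I would apply Lemma~\ref{lema:dt-max-degree}: the expected maximum degree of this layer's Delaunay triangulation is $O(\log n/\log\log n)$, so $N$ has that many neighbours, and each distance $D(q,Cen(N'))\in R^{d}$ is computed in $O(d)$ arithmetic operations; thus one iteration costs $O(d\log n/\log\log n)=O(d\log n)$ in expectation.

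For the number of iterations, note that the values $D(q,Cen(N))$ strictly decrease along the walk, so no node is revisited and the process halts at a local minimum. To obtain the $O(n^{1/d})$ bound I would argue, in the spirit of Lemma~\ref{lema:waking-in-dt}, that a greedy walk in the Delaunay triangulation of $m$ sites coming from a spatial Poisson process visits only $O(m^{1/d})=O(n^{1/d})$ sites in expectation: the centers occupy a region of diameter $O(n^{1/d})$ with typical Delaunay edge length $\Theta((n/m)^{1/d})$, each hop reduces the distance to $q$ by $\Theta((n/m)^{1/d})$ in expectation, and the distance to be made up is at most the diameter, giving $O\!\bigl(n^{1/d}/(n/m)^{1/d}\bigr)=O(m^{1/d})$ hops. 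Multiplying, the navigating step runs in $O(n^{1/d})\cdot O(d\log n)=O(dn^{1/d}\log n)$ expected time.

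The step I expect to be the main obstacle is the bound on the number of iterations. Greedy routing in a Delaunay triangulation need not make uniform Euclidean progress --- in the worst case it can take many tiny steps while swinging around $q$ --- so the $O(m^{1/d})$ estimate genuinely relies on the Poisson model rather than on a purely combinatorial argument, and turning ``each hop advances by $\Theta((n/m)^{1/d})$'' into a rigorous statement (say, a potential/renewal argument on the decreasing distance sequence) is the delicate point. A secondary issue is that Lemmas~\ref{lema:dt-max-degree} and~\ref{lema:waking-in-dt} are stated for point sets that are themselves Poisson, whereas the $AMES$ centers are only derived from such a set; one must check that, because the median split is balanced and each $AMES$ center lies inside the roughly equal-volume cell it represents, the layer of centers inherits the degree and walk bounds used above.
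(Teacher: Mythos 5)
Your proposal is correct in outline and follows essentially the same decomposition as the paper: (number of nodes visited) $\times$ (work per node), with the first factor bounded by $O(n^{1/d})$ via the Delaunay-walk complexity and the second by the expected maximum degree from Lemma~\ref{lema:dt-max-degree}. The two arguments differ only in bookkeeping and in candor. The paper gets the factor $d$ by citing Lemma~\ref{lema:waking-in-dt} for an $O(n^{1/d})$ bound on the number of simplices crossed and multiplying by the number of vertices per simplex, whereas you attach the $d$ to the cost of evaluating a Euclidean distance per neighbour; both placements yield the same final bound. More substantively, the paper simply invokes Lemma~\ref{lema:waking-in-dt}, which is stated for the \emph{Straight Walk}, even though the loop in Algorithm~\ref{alg:query} is a greedy descent on $D(q,Cen(\cdot))$, not a straight walk along the segment to $q$; it does not remark on the mismatch. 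You correctly identify this as the delicate point and sketch a progress argument for the greedy walk, which is the honest way to proceed but is not actually carried out rigorously in your write-up either (greedy routing on a Delaunay triangulation need not make uniform metric progress, as you note). Likewise, your observation that Lemmas~\ref{lema:dt-max-degree} and~\ref{lema:waking-in-dt} are stated for Poisson point sets while the layer consists of derived $AMES$ centers applies equally to the paper's proof, which passes over it silently. In short, your proposal matches the paper's argument and its conclusion, and the two caveats you raise are genuine gaps shared by the published proof rather than defects introduced by your approach.
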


\begin{proof}
	From Lemma \ref{lema:waking-in-dt} we know that the navigating step passes at most $O(n^{1/d})$ simplices. A d-dimensional simplex has $d-1$ vertexes, and thus the number of points passed by the navigation process is $O(dn^{1/d})$.
	While a node is visited, the process goes over the neighbors of this node, and from Lemma \ref{lema:dt-max-degree} we know that the expected maximum degree of each node is $O(\log{n})$. Thus the total time of the navigating is $O(dn^{1/d}\log{n})$.
\end{proof}

Now we are ready to present the final results about the complexities.

\begin{theorem}
	The expected time complexity of Algorithm \ref{alg:pre}, which is the preprocessing time complexity, is $O(n\log{n})$.
\end{theorem}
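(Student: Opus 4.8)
The plan is to aggregate the three per-step bounds already established for Algorithm~\ref{alg:pre} --- namely Lemmas~\ref{lema:pre-step1}, \ref{lema:pre-step2}, and \ref{lema:pre-step3} --- into a single statement about the total preprocessing time. Since Algorithm~\ref{alg:pre} is nothing more than the sequential composition of \SplitTree{} (followed by the BMST modification), \ComputeSpheres{}, and \HierarchicalDelaunay{}, its running time is the sum of the running times of these three steps, and the expected total is therefore $O(n\log n) + O(n\log n) + O(n\log n) = O(n\log n)$.

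Concretely, I would carry out the argument in the following order. First I would invoke Lemma~\ref{lema:pre-step1} to bound the cost of Step~1 (building the median split tree and rebalancing it into a BMST) by $O(n\log n)$. Then I would note that the BMST-modification pass only contracts subtrees of constant size near the leaves, which costs $O(n)$ additional time and is absorbed into the $O(n\log n)$ term. Next I would invoke Lemma~\ref{lema:pre-step2} for Step~2 (computing $AMES(N,0.1)$ at every node via Algorithm~\ref{alg:ames}), again $O(n\log n)$; here the fixed $\epsilon = 0.1$ makes the $1/\epsilon^2$ factor in Lemma~\ref{lema:ames-time} a constant, so the per-node cost is linear in $|N|$ and the recursion $T(n) = 2T(n/2) + O(n)$ applies. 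Finally I would invoke Lemma~\ref{lema:pre-step3} for Step~3 (constructing the Delaunay triangulation layer by layer, relying on Lemma~\ref{lema:dt-time}), once more $O(n\log n)$ in expectation. Summing the three bounds gives the claimed $O(n\log n)$ expected preprocessing time.

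There is essentially no hard part here: the theorem is a bookkeeping corollary of lemmas proved immediately beforehand, and the only things to be slightly careful about are (i) making explicit that the three steps run one after another rather than, say, being nested, so that their costs add rather than multiply, and (ii) observing that the word ``expected'' is inherited from Step~3 (the Delaunay construction of Lemma~\ref{lema:dt-time} and the degree bound of Lemma~\ref{lema:dt-max-degree} are the only randomized ingredients, since the input follows the spatial Poisson process), whereas Steps~1 and~2 are deterministic and hence their bounds hold unconditionally. If anything counts as the ``main obstacle,'' it is merely stating precisely which sub-costs (the BMST rebalancing, the per-layer loop overhead in \HierarchicalDelaunay{}) are lower-order and get swallowed by $O(n\log n)$; none of this requires real work.
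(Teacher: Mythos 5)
Your proposal is correct and matches the paper's own proof, which likewise just sums the bounds of Lemmas \ref{lema:pre-step1}, \ref{lema:pre-step2}, and \ref{lema:pre-step3} over the three sequentially executed steps of Algorithm \ref{alg:pre}. The extra observations you add (the lower-order cost of the BMST rebalancing, the constant $\epsilon=0.1$, and the fact that only Step 3 contributes the ``expected'' qualifier) are sensible refinements but not a different argument.
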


\begin{proof}
	The preprocessing consists of three steps, the time complexities of which are shown in Lemma \ref{lema:pre-step1}, \ref{lema:pre-step2} and \ref{lema:pre-step3}. Adding them and we get the desired conclusion.
\end{proof}

\begin{theorem}
	The space complexity of Algorithm \ref{alg:pre} is $O(n\log{n})$.
\end{theorem}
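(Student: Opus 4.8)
The plan is to bound, separately, the three kinds of information that Algorithm~\ref{alg:pre} stores in $HDG(P)$: the $BMST$ skeleton together with the point sets (or point indices) kept at its nodes, the $AMES$ attached to every node, and the Delaunay (graph) edges inside each layer. Adding the three bounds will give the claimed $O(n\log n)$, and I will note at the end that this also dominates the scratch space used during construction (by Algorithm~\ref{alg:ames} and by the Delaunay construction of Lemma~\ref{lema:dt-time}).

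First I would handle the tree and the point sets. As noted when the $BMST$ is defined, it has $O(\log n)$ layers, and for a fixed layer $L_{BMST(P)}(j)$ the nodes of that layer partition $P$; hence the point sets stored across one layer amount to $O(n)$ words (treating the dimension $d$ as a constant), and over all $O(\log n)$ layers this is $O(n\log n)$. The number of nodes in the whole tree is $\sum_{j} 2^{j}=O(n)$, a geometric sum dominated by its last term, so the $O(n)$ tree edges are absorbed. Second, each node carries exactly one $AMES$, i.e.\ a center and a radius, which is $O(1)$ words for constant $d$; with $O(n)$ nodes this contributes only $O(n)$.

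Third — the only step that uses the distributional hypothesis — I would bound the graph edges. Layer $j$ stores the Delaunay triangulation of the $2^{j}$ $AMES$ centers of that layer. Since $P$ follows the spatial Poisson process, Lemma~\ref{lema:dt-max-degree} bounds the expected maximum degree of such a Delaunay triangulation by $O(\log n/\log\log n)$, so layer $j$ has at most $2^{j}\cdot O(\log n/\log\log n)$ edges in expectation. Summing over $j=0,\dots,O(\log n)$ is again dominated by the last layer and yields $O\!\left(n\log n/\log\log n\right)\subseteq O(n\log n)$. Combining with the two previous bounds gives total space $O(n\log n)$.

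The main obstacle is precisely this third step: a Delaunay triangulation of $m$ points in $R^{d}$ can have $\Omega\!\left(m^{\lceil d/2\rceil}\right)$ simplices in the worst case, so the near-linear per-layer bound is genuinely a consequence of the Poisson-process assumption through Lemma~\ref{lema:dt-max-degree}, not something true for arbitrary inputs. One must also be careful that summing the per-layer costs over the $O(\log n)$ layers contributes only a single extra logarithmic factor — which it does, because the layer sizes $2^{j}$ form a geometric progression whose sum is dominated by the last layer — rather than compounding into $O(n\log^{2}n)$.
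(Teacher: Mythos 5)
Your proof is correct and rests on the same key ingredient as the paper's: bounding the graph edges of each layer via the expected maximum degree of a Poisson--Delaunay triangulation (Lemma~\ref{lema:dt-max-degree}), with everything else contributing at most $O(n\log n)$. You are in fact somewhat more careful than the paper, which counts only tree and graph edges over the $O(n)$ nodes, whereas you also account explicitly for the point sets stored across the $O(\log n)$ layers and for the per-node $AMES$ data; this extra bookkeeping does not change the approach or the bound.
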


\begin{proof}
	The space needed to store the $HDG$ is the proportional to the number of the graph edges and tree edges. The number of graph edges connected to each node is $O(\log{n})$ according to Lemma \ref{lema:dt-max-degree}, and the number of tree edges is constant. On the other hand, the number of nodes in $HDG$ is $O(n)$. Finally, we get the result that the space complexity is $O(n\log{n})$.
	\qed
\end{proof}

\begin{theorem}
	The time complexity of Algorithm \ref{alg:query}, which is the query complexity, is $O(dn^{1/d}\log{n}+kn^\rho\log{n})$, where $\rho<1$ is a constant.
\end{theorem}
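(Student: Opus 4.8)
The plan is to split Algorithm~\ref{alg:query} into its three steps --- the descending step (the first \texttt{while} loop), the navigating step (the second \texttt{while} loop), and the answering step (the \texttt{for} loop on lines~\ref{line:query:for-start}--\ref{line:query:for-end}) --- bound each in isolation, and add the bounds. The navigating step is already handled: Lemma~\ref{lema:query-navi} gives $O(dn^{1/d}\log n)$ for it, combining Lemma~\ref{lema:waking-in-dt} (number of simplices crossed) with Lemma~\ref{lema:dt-max-degree} (expected degree of a node), so I would simply cite it.

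For the descending step I would argue that, because $HDG(P)$ is layered over a \emph{balanced} median split tree, every root-to-leaf path has length $O(\log n)$; hence the loop --- which in addition stops as soon as $|N|\le 2k$ --- performs $O(\log n)$ iterations, each computing two Euclidean distances in $R^d$ at $O(d)$ cost apiece, for a total of $O(d\log n)$. Since $n^{1/d}\ge 1$ this is dominated by the navigating bound and contributes nothing new.

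For the answering step, the loop index ranges over $i=0,\dots,\lceil\log_c n\rceil$, i.e.\ $O(\log n)$ iterations; in each one the radius $r=(D(q,Cen(N))+Rad(N))c^i/n$ is computed in $O(d)$ time and one $(c,r)$-kNN query is issued, which by Lemma~\ref{lema:crknn-complexity} runs in $O(kn^\rho)$ expected time with $\rho<1$ a constant (its data structure having been built during preprocessing, so no rebuild cost is charged here). This gives $O(kn^\rho\log n)$ for the step. Summing the three bounds and absorbing $O(d\log n)$ into $O(dn^{1/d}\log n)$ yields the claimed $O(dn^{1/d}\log n+kn^\rho\log n)$.

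The argument is essentially bookkeeping, so I do not expect a deep obstacle; the one place that needs care --- the nearest thing to a sticking point --- is the answering step, where I must (i) charge only the \emph{query} time of each $(c,r)$-kNN call, not its preprocessing, (ii) remember the $O(d)$ per-iteration cost of recomputing $r$, and (iii) justify that the iteration count is governed by the explicit bound $\lceil\log_c n\rceil=O(\log n)$ regardless of when the loop returns. For (iii) it helps to note the loop does return within that many steps, since once $r$ is large enough that $S(q,r)$ contains all of $P$ we have $|P\cap S(q,c\cdot r)|=n\ge k$, so by the definition of $(c,r)$-kNN the query cannot output $\emptyset$.
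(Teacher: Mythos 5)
Your proof takes essentially the same route as the paper: decompose the query into the descending, navigating, and answering steps, cite Lemma \ref{lema:query-navi} for the navigation cost, and charge $O(kn^\rho)$ per $(c,r)$-kNN call over $O(\log n)$ iterations via Lemma \ref{lema:crknn-complexity}. Your additional bookkeeping (the $O(d)$ per-distance cost and the argument that the answering loop must terminate within $\lceil\log_c n\rceil$ iterations) is more careful than the paper's version but does not change the argument.
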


\begin{proof}
	The time complexity of Algorithm \ref{alg:query} consists of three parts. For the first part, which is descending part, it is easy to see that the complexity is $O(\log{(n/k)})$. And the time complexity of the second part is already solved by Lemma \ref{lema:query-navi}, which is $O(dn^{1/d}\log{n})$. The third part is to invoke the $(c,r)$-kNN query for $\log{n}$ times. By Lemma \ref{lema:crknn-complexity} each invocation of $(c,r)$-kNN needs $O(kn^\rho)$ where $\rho>1$ is a constant.  If we take this algorithm as a Turing reduction, then the time to invoke $(c,r)$-kNN query is constant. Thus the third step needs $kn^\rho\log{n}$ time.
	Adding the three parts and the desired result is achieved.
	\qed
\end{proof}

\section{Conclusion}\label{sec:conc}
In this paper we proposed an algorithm for the approximate k-Nearest-Neighbors problem. We observed that there are two kinds of approximation criterion in the history of this research area, which is called the distance criteria and the recall criteria in this paper. But we also observed that all existing works do not have theoretical guarantees on this criteria. We raised a new definition for the approximate k-Nearest-Neighbor problem which unifies the distance criteria and the recall criteria, and proposed an algorithm that solves the new problem. The result of the algorithm can satisfy at least one of the two criterion. In our future work, we will try to devise new algorithms that can satisfy both of the criterion.

\bibliographystyle{splncs04}
\bibliography{library}

\begin{subappendices}
		\renewcommand{\thesection}{\Alph{section}}
	\section{Proof of Lemma \ref{lema:crknn-complexity}}\label{apdx:proof-crknn}
	\begin{proof}
		The algorithm for $(c,r)$-kNN is adapted from the standard LSH algorithm for $(c,r)$-NN. See \cite{Datar2004} for more details. Briefly speaking, let $\mathcal{H}$ be a family of LSH functions, ${h_i^j}$ be a set of LSH functions uniformly drawn from $\mathcal{H}$, $1\le i\le M, 1\le j\le L$, and $\mathcal{G}_j(p)=(h_1^j(p),\cdots,h_M^j)$ be a composition of $M$ LSH functions. The algorithm stores each element $p$ in the input point set $P$ in the hash bucket $\mathcal{G}_j(p)$, $1\le j\le L$, and for the query point $q$, the algorithm scans the buckets $\mathcal{G}_j{q}$, $1\le j\le L$, and collects the points in $S(q,r)$. If the algorithm collects $k$ points in $S(q,cr)$, then the algorithm returns the $k$ points. If the algorithm have scanned $3L$ points before collects  enough points, it returns $No$.
		
		Now we prove the algorithm succeeds with constant probability. The algorithm succeeds if the following two conditions are true. Here we call the points out side $S(q,cr)$ as \emph{outer} points.
		
		\begin{enumerate}[A.]
			\item If there exists $p_1,\cdots,p_k\in B(q,r)$, then for each $p_i$ there exists $\mathcal{G}_j$ such that $\mathcal{G}_j(p_i)=\mathcal{G}_j(q)$, and
			\item the algorithm encounters at most $3L$ outer points.
		
		\end{enumerate}
		First we introduce the following two probabilities.

		Let $P_1=\Pr[\mathcal{G}_j(p')=\mathcal{G}_j(q)\mid D(p',q)\ge cr]$. Apparently $P_1\le p_2^K$. 
		Then let $K=\log_{1/p_2}{n}\Rightarrow P_1\le\frac{1}{n}$. For all points outside $B(q,cr)$, the expected number of outer points satisfying $\mathcal{G}_j(p')=\mathcal{G}_j(q)$ for some $j$ is at most $\frac{1}{n}\times n-1$.		
		Thus for all $1\le i\le L$, the expected number of outer points satisfying $\mathcal{G}_j(p')=\mathcal{G}_j(q)$ is at most $L$.
		
		By Markov's inequality, 
		$$\Pr[|\{\mathcal{G}_j(p')=\mathcal{G}_j(q) \mid D(p',q)\ge cr \}|\ge 3L]\le L/3L=\frac{1}{3}.$$
		This is	the possibility of scanning at least $3L$ outer points, which is the possibility that event $B$ fails. Thus $\Pr[B]\ge \frac{2}{3}$.
		
		On the other hand, let $P_2=\Pr[\mathcal{G}_j(p)=\mathcal{G}_j(q)\mid D(p,q)\le r]$. By setting $M=\log_{1/p_2}{n}$ we have the following derivations
		
		$P_2\ge p_1^{M}=p_1^{\log_{1/p_2}{n}}=n^{-\frac{\log_{1/p_1}{n}}{\log_{1/p_2}{n}}}=n^{-\rho}$
		
		Setting $L=kn^{\rho}$, the possibility that there exists at least one $1\le j\le L$  such that $\mathcal{G}_j(p)=\mathcal{G}_j(q)$ is at least 
		$$1-(1-P_2)^L\ge 1-(1-n^{-\rho})^{kn^{\rho}}\ge 1-e^{-k}.$$
		
		For all the $k$ points, the possibility that $p$ coincides with each of the $k$ points, which is $\Pr[A]$, is that 
		$$\Pr[A]\ge(1-e^{-k})^k\ge 1-e^{-1}$$
		
		The last inequality comes from the montonicity of the function $(1-e^{-x})^x$.
		Finally, the probability of condition $A$ and $B$ both succeed can be easily computed, which is constant probability. The details are omitted.
		
		After all, we have proved that the algorithm can return the result of $(c,r)$-kNN with constant probability by setting $M=\log_{1/p_2}{n}$ and $L=kn^{\rho}$. Finally substituting the $M$ and $L$ with the proper values, we get the complexities of the algorithm, which is stated in Lemma \ref{lema:crknn-complexity}.
		
	\end{proof}
		
\end{subappendices}

\end{document}